\documentclass[5p,times,twocolumn]{elsarticle}

\usepackage{amsmath,amssymb,amsthm,mathtools}
\usepackage{graphicx}
\usepackage{booktabs}
\usepackage{array}
\usepackage{adjustbox}
\usepackage{threeparttable}
\usepackage{microtype}
\usepackage{hyperref}
\usepackage{xcolor}
\hypersetup{
  colorlinks   = true,
  citecolor    = blue!60!black,
  linkcolor    = blue!60!black,
  urlcolor     = blue!60!black
}

\newtheorem{proposition}{Proposition}
\newtheorem{lemma}{Lemma}
\newtheorem{corollary}{Corollary}
\newtheorem{hypothesis}{Hypothesis}
\newtheorem{remark}{Remark}

\usepackage[toc,page]{appendix}

\newcommand{\Hrel}{H^{\mathrm{rel}}}
\newcommand{\Aleak}{A^{\mathrm{leak}}}
\newcommand{\Ashare}{A^{\mathrm{share}}}
\newcommand{\Amax}{A^{\mathrm{max}}}
\newcommand{\rhoshare}{\rho^{\mathrm{share}}}
\newcommand{\rholeak}{\rho^{\mathrm{leak}}}
\newcommand{\rhomax}{\rho^{\mathrm{max}}}
\renewcommand{\Pr}{\mathbb{P}}
\newcommand{\E}{\mathbb{E}}

\journal{Physica A: Statistical Mechanics and its Applications}

\begin{document}

\begin{frontmatter}

\title{Hall-Like Transversal Stress and Sandpile Criticality on
  Real Production Networks}

\author[idb]{Diego Vallarino\corref{cor1}}
\ead{[email protected]}
\cortext[cor1]{Corresponding author.}
\affiliation[idb]{organization={Inter-American Development Bank},
  city={Washington, D.C.},
  country={United States}}

\begin{abstract}
This paper develops a Hall-Sandpile model of economic instability that
combines a Hall-like transversal stress mechanism with sandpile
threshold dynamics on a real production-network substrate. In analogy
with the physical Hall effect, where exposed flows under an external
field generate stress in a transversal direction, we model economic
shocks as fields that act on flow-intensive, low-redundancy,
low-capacity nodes and produce systemic stress through a multiplicative
conversion function. The accumulated stress drives a discrete toppling
rule and an avalanche dynamics whose effective activation threshold
declines with transversal exposure. The model is calibrated on annual
World Input--Output Database (WIOD) production networks for 2000--2014
and simulated on the 2014 substrate (2{,}283 country--sector nodes)
under three alternative propagation normalisations to avoid mechanical
near-criticality from row-stochastic operators. Controlled Monte Carlo
experiments over external field intensity and redundancy stress
generate four ordered regimes: stable absorption, latent fragility,
critical transition, and avalanche regime. Mean avalanche size and the
probabilities of finite-size systemic events $\Pr(S\!\geq\!5)$,
$\Pr(S\!\geq\!10)$ and $\Pr(S\!\geq\!20)$ rise jointly with field
intensity and redundancy stress. Tail diagnostics show
regime-dependent thickening of the avalanche distribution, but the
estimated tail indices remain too high to interpret as evidence of
universal power-law criticality. The contribution is therefore a
finite-size, real-network description of how transversal stress
activates structural fragility, not a claim of self-organised
criticality in the global economy.
\end{abstract}

\begin{keyword}
econophysics \sep self-organised criticality \sep sandpile model
\sep production networks \sep WIOD \sep systemic risk \sep
transversal stress \sep avalanche dynamics
\PACS 89.65.Gh \sep 89.75.Hc \sep 89.75.-k \sep 64.60.av
\end{keyword}

\end{frontmatter}

% \linenumbers

%%==================================================================%%
\section{Introduction}
\label{sec:intro}

Economic crises rarely propagate along a single channel. An energy
shock may surface as inflation, fiscal strain, food insecurity or
industrial contraction; a monetary shock may emerge as liquidity
stress, sovereign-risk repricing or default clustering; a geopolitical
shock may show up as supply-chain disruption, commodity-price pressure
or topological reconfiguration of trade flows; and a technological
restriction may appear as bottlenecks deep inside production networks.
A common feature of these episodes is that the dimension in which the
shock is delivered need not coincide with the dimension in which it is
felt. Economic systems do not merely transmit shocks along edges of a
network: they convert shocks across dimensions.

The empirical record offers ample illustration. The 2008--2009
financial crisis began as a balance-sheet shock in a narrow segment
of the U.S.\ housing market and ended as a global recession with
synchronised contractions in trade, credit and employment. The
2011 Great East Japan earthquake produced a localised supply-chain
disruption whose effects propagated through specific automotive and
electronics sectors well beyond the immediate seismic zone
\citep{carvalho2021supply}. The 2020 pandemic combined a labour-supply
shock, a logistics shock and a demand-composition shock, with
heterogeneous effects across sectors that could not be captured by any
single aggregate variable. The 2022 energy and food-price episode
linked a geopolitical shock to inflation in distant regions through
food-and-fertiliser markets that had previously been treated as
loosely coupled. None of these episodes is well described as a
linear pass-through along a single network channel; each involves a
cross-dimensional transformation of the original perturbation, with
amplification depending on which nodes are flow-intensive,
low-redundancy and low-capacity at the moment the field is applied.

Most network-based models of contagion in economics are
\emph{longitudinal}: a perturbation enters a node and propagates along
weighted directed edges, with amplification governed by topology and
balance-sheet linkages \citep{acemoglu2012network,carvalho2019production,
elliott2014financial,gai2010contagion,glasserman2016contagion}. This
class of models has clarified how the architecture of input--output
linkages shapes aggregate fluctuations and financial contagion. It is,
however, less well suited to representing situations in which an
external field, rather than a localised shock at a single node,
loads stress simultaneously on many exposed flows and converts that
loading into pressure of a different nature.

In this paper we propose a complementary mechanism, drawing a
disciplined structural analogy with the Hall effect of solid-state
physics \citep{hall1879new}. In the physical Hall effect, a current
flowing through a conductor under an external magnetic field generates
a transverse voltage. In economic networks, exposed flows operating
under an external field can generate stress in a different dimension
from the original flow. The analogy is structural and does not imply
that economies obey electromagnetic laws. We use it as an organising
device to formalise how external loading interacts with flow exposure,
redundancy and absorptive capacity to produce \emph{transversal stress}
on a node.

We embed this Hall-like stress conversion into a sandpile threshold
dynamics in the spirit of \cite{bak1987self} and
\cite{bak1988self}. Each node in a real production network accumulates
stress over time, dissipates a fraction of it, and topples when its
internal stress crosses a node-specific threshold. Hall-like
transversal stress enters the dynamics by lowering the effective
toppling threshold: the larger the loading on an exposed,
low-redundancy node, the easier it is for the node to activate. The
substrate on which this dynamics runs is the real annual World
Input--Output Database (WIOD) network of intermediate transactions
between country--sector pairs \citep{timmer2015illustrated,
timmer2016anatomy} for 2000--2014, with 2014 as the main simulation
year.

The paper is positioned at the intersection of econophysics, complex
networks, production networks, systemic risk and nonlinear dynamics.
It contributes a tractable computational mechanism for shock
conversion under uncertainty, nonlinearity and structural fragility,
distinct from but complementary to curvature-based ``Sandpile
Economics'' approaches that emphasise geometric fragility of the
network itself \citep{vallarino2024sandpile}. Where curvature-based
arguments explain why a network may be \emph{structurally} fragile,
the Hall-Sandpile mechanism explains how external fields
\emph{activate} that fragility through transversal stress.

The empirical object of the paper is the observed production-network
topology. The shock fields are simulated in a controlled
statistical-mechanics experiment. We do not claim to recover any
historical crisis episode, nor do we attempt to estimate the causal
effect of a specific past shock. We instead study how a real
production network responds to controlled nonlinear loading. Three
alternative propagation operators ensure that nonlinear avalanche
results do not arise mechanically from a row-stochastic transition
matrix with unit spectral radius.

Our main findings are fourfold. First, the row-share spectral radius
of WIOD lies near unity by construction, while the leakage-adjusted
spectral radius lies around $0.33$--$0.34$ and the max-row-normalised
spectral radius rises from about $0.22$ in 2000 to about $0.32$ in
2014; the dissipative operator is therefore well below criticality on
its own and the dynamics is not driven by a unit-radius operator.
Second, relative Hall-like exposure is concentrated in a small
upper tail of country--sector nodes, with the top exposures dominated
by Chinese production sectors and water-utility nodes in the 2014
calibration; this concentration is a feature of the network
calibration, not a sectoral diagnosis. Third, controlled Monte Carlo
experiments produce four ordered regimes---stable absorption, latent
fragility, critical transition and avalanche---with mean avalanche
size and threshold-event probabilities rising jointly with external
field intensity and redundancy stress. Fourth, tail diagnostics show
regime-dependent thickening of the avalanche distribution, but the
estimated tail exponents (around $5.9$--$6.2$ in the most active
regimes) are too steep to support a claim of universal power-law
behaviour. We therefore describe the evidence as finite-size avalanche
dynamics on a real economic network.

The remainder of the paper is organised as follows. Section
\ref{sec:lit} positions the contribution within the literatures on
self-organised criticality, econophysics, and economic networks.
Section \ref{sec:model} develops the Hall-Sandpile model. Section
\ref{sec:design} describes the real-network computational design.
Section \ref{sec:results} reports results across propagation
normalisation, exposure concentration, baseline regimes, phase
diagrams and tail behaviour. Section \ref{sec:discussion} discusses
implications and limitations. Section \ref{sec:conclusion} concludes.

%%==================================================================%%
\section{Related literature}
\label{sec:lit}

This section positions the Hall-Sandpile model within four related but
distinct strands of work: self-organised criticality (SOC) in physics
and its generalisations; econophysics and the statistical physics of
financial systems; network-origin and production-network theories of
aggregate fluctuations; and curvature-based ``Sandpile Economics''
approaches to structural fragility. We close the section by stating
what is genuinely new in our contribution relative to each of these
strands.

\subsection{Self-organised criticality and sandpile models}

Sandpile dynamics and SOC were introduced by \cite{bak1987self,
bak1988self} as a generic mechanism by which slowly driven dissipative
systems generate scale-free avalanche distributions without parameter
tuning. The original Bak--Tang--Wiesenfeld model and its many variants
exhibit power-law statistics for avalanche size, duration and area,
and have become a paradigmatic illustration of how heavy-tailed event
distributions emerge in spatially extended systems. \cite{bak1996nature}
provides the textbook synthesis; \cite{turcotte1999self} reviews
applications across geophysics; and \cite{vespignani1998how} offer a
mean-field analytical framework that clarifies the relationship
between dissipation, branching ratios and the onset of criticality.

Two features of the canonical SOC literature matter for our purposes.
First, SOC is typically studied on regular lattices or on synthetic
networks (Erd\H{o}s--R\'enyi, scale-free, modular), in which the
substrate is an artefact of the modelling exercise. Our paper instead
takes the substrate as observed: the WIOD input--output network is
not an idealised topology, it is a description of the actual
country--sector linkages of the world economy as recorded in
\cite{timmer2015illustrated}. Second, canonical SOC delivers tail
exponents in the range $\alpha\in[1,2]$, often close to $\alpha=3/2$
in the Bak--Tang--Wiesenfeld model. We will see in
Section~\ref{sec:res:tail} that our estimated exponents
($\alpha\approx 5.9$--$6.2$) are markedly steeper than this canonical
range, which is one of the reasons we refrain from claiming SOC for
the global economy. Branching-process formulations of SOC
\citep{lippiello2008dynamical} provide a useful benchmark for
finite-size diagnostics.

\subsection{Econophysics and systemic risk}

A parallel tradition uses statistical-physics methods to analyse
financial and economic systems. \cite{mantegna2000introduction} and
\cite{bouchaud2003theory} provide the textbook treatments of
econophysics; \cite{schweitzer2009economic} survey the network-physics
view of economic systems; and \cite{sornette2002predictability}
discusses predictability of catastrophic events from material rupture
to financial crashes. In banking and finance,
\cite{may2008complex} and \cite{haldane2011systemic} draw analogies
between ecological and banking ecosystems, while
\cite{caldarelli2013reconstructing} address the empirical problem of
reconstructing credit networks from partial information. Network
metrics of systemic risk such as DebtRank
\citep{battiston2012debtrank} measure how much distress propagates
from a node to the rest of the network. Cascade dynamics on networks
have been studied as binary-threshold contagion in
\cite{watts2002simple}, as load-redistribution attacks in
\cite{motter2002cascade}, and as interdependent network failures in
\cite{buldyrev2010catastrophic}. \cite{glasserman2016contagion}
provide a rigorous review of contagion in financial networks.

The Hall-Sandpile mechanism is related to but distinct from these
approaches. Cascade-based attacks
\citep{motter2002cascade,buldyrev2010catastrophic} focus on
\emph{topological} amplification through load redistribution after a
targeted node removal. Threshold-contagion models
\citep{watts2002simple,gai2010contagion,elliott2014financial} focus on
\emph{linear} propagation of binary distress along edges. DebtRank
\citep{battiston2012debtrank} is a static metric of vulnerability, not
a dynamical model of activation. Our model adds two features: (i) an
\emph{external field} $B_t$ that loads stress simultaneously on many
exposed nodes rather than starting from a single targeted node; and
(ii) a \emph{multiplicative} stress-conversion term \eqref{eq:Hall}
that interacts with redundancy and capacity rather than a linear
pass-through.

\subsection{Network origins of aggregate fluctuations and production networks}

A third strand of literature, mostly within macroeconomics, has
developed network-based theories of aggregate fluctuations.
\cite{long1983real} provided an early model in which sectoral shocks
aggregate to business-cycle fluctuations through real linkages.
\cite{gabaix2011granular} showed how the granularity of firm-size
distributions can give rise to non-trivial aggregate volatility,
linking firm-level idiosyncratic shocks to business-cycle properties.
\cite{acemoglu2012network} formalised how the input--output structure
of the economy shapes the propagation of microeconomic shocks into
aggregate fluctuations. \cite{carvalho2019production} provide a
comprehensive primer on production networks, while
\cite{baqaee2019networks,baqaee2019macroeconomic} extend the analysis
beyond Hulten's theorem to non-linear and disaggregated environments.
On the empirical side, \cite{barrot2016input} document how input
specificity shapes the propagation of idiosyncratic shocks, and
\cite{carvalho2021supply} use the 2011 Great East Japan earthquake to
identify supply-chain disruption effects on downstream firms.

This literature has clarified that the architecture of input--output
linkages is consequential for aggregate dynamics. It is, however,
predominantly \emph{longitudinal}: a shock enters a node and
propagates along weighted directed edges. The Hall-Sandpile model
contributes a \emph{transversal} mechanism: an external field $B_t$
loads stress on flow-intensive, low-redundancy nodes through
\eqref{eq:Hall}, and the resulting transversal stress lowers the
effective activation threshold via Proposition~\ref{P1}. Aggregate
fluctuations in our framework therefore reflect not only the
topological transmission of sectoral shocks but also the
field-mediated activation of structural fragility.

\subsection{Sandpile Economics and curvature-based fragility}

\cite{vallarino2024sandpile} introduces a curvature-based ``Sandpile
Economics'' programme, in which production networks are characterised
by Forman--Ricci or Ollivier--Ricci curvature concentrations that act
as geometric markers of structural vulnerability. In this view, a
network is fragile when its negative-curvature subgraphs are dense and
its effective branching number is large; cascades emerge as power-law
distributions on such substrates. The present paper shares the
sandpile language and the network-substrate philosophy of
\cite{vallarino2024sandpile} but moves the analytical focus from
\emph{geometric fragility} to \emph{external loading}: curvature
describes the shape of the substrate, while the Hall-like mechanism
describes the loading on it. The two perspectives are
complementary. A high-curvature, low-redundancy node is structurally
exposed; a high-$B_t$ field activates that exposure; and the avalanche
size $S_t$ in \eqref{eq:Sdef} reflects the joint action of structure
and field.

\subsection{What is new}

Relative to the four strands above, the contribution of the present
paper is fourfold. First, we formalise a Hall-like multiplicative
stress-conversion mechanism \eqref{eq:Hall} that distinguishes
external-field intensity from node-level exposure, redundancy and
capacity, and we embed it into a sandpile threshold dynamics. Second,
we calibrate the model on the observed WIOD production network rather
than on a synthetic substrate, and we discipline the calibration with
three alternative propagation operators to rule out mechanical
near-criticality. Third, we map the regime structure of the model
across a two-dimensional grid in $(\bar{B},\sigma_D)$ and document a
positively sloped phase frontier between absorption and avalanche,
consistent with Proposition~\ref{P3}. Fourth, we report tail
diagnostics that show regime-dependent thickening but not universal
power-law behaviour, and we explicitly discuss how this finding fits
within the broader debate on power laws in empirical data
\citep{stumpf2012critical}.

%%==================================================================%%
\section{Model}
\label{sec:model}

\subsection{Overview}
\label{sec:model:over}

The Hall-Sandpile model has four building blocks: (i) a weighted
directed network defined by a propagation operator on a real
input--output topology; (ii) a Hall-like multiplicative
stress-conversion function that maps an external field and node-level
exposure into transversal stress; (iii) a discrete-time stress
accumulation and toppling rule of sandpile type; and (iv) a
finite-size avalanche statistic that aggregates toppling events at
each period. The first two blocks are non-standard relative to the
existing literature on contagion in production networks; the last two
follow the canonical Bak--Tang--Wiesenfeld dynamics with explicit
dissipation. This subsection summarises how the four blocks fit
together; the rest of the section makes them precise.

The network block grounds the model in observed data. Where most
SOC-style economic models use synthetic substrates, the present
framework requires a concrete propagation operator derived from WIOD.
We construct three such operators to discipline the analysis. The
Hall-like block adds a multiplicative coupling between an external
field $B_t$ and the inverse of node-level redundancy and capacity,
producing transversal stress $H_{i,t}$ that depends on \emph{which}
nodes are exposed, not only on the magnitude of the field. The
sandpile block converts accumulated stress into avalanches through a
threshold rule. The avalanche statistic $S_t$ counts toppling nodes
and provides the observable used in all phase-diagram and tail
analyses below. The interplay between these four blocks is what
makes the model tractable yet non-trivial: each block is simple in
isolation, but their composition produces a regime structure that
neither network propagation alone nor sandpile dynamics alone would
generate.

\subsection{Network and propagation operators}
\label{sec:model:net}

Let $G_t = (V, E_t, A_t)$ denote a weighted directed network in year
$t$, with node set $V$ of country--sector production units and edge
weights $A_t = [a_{ij,t}]$ derived from the WIOD intermediate
transaction matrix $Z_t = [z_{ij,t}]$, where $z_{ij,t}$ is the
intermediate flow from node $i$ to node $j$. We construct three
alternative propagation operators.

The \emph{row-share} matrix is
\begin{equation}
  \Ashare_{ij,t} \;=\; \frac{z_{ij,t}}{\sum_{k} z_{ik,t}},
  \label{eq:Ashare}
\end{equation}
which captures the distribution of outgoing intermediate flows. By
construction, its rows sum to one (whenever the row total is positive)
and its spectral radius is close to one. This near-unit radius is a
property of the normalisation, not of the underlying economy.

The \emph{leakage-adjusted dissipative} matrix is
\begin{equation}
  \Aleak_{ij,t} \;=\; \ell_{i,t}\,\frac{z_{ij,t}}{\sum_{k} z_{ik,t}},
  \quad \ell_{i,t} \in [0,1],
  \label{eq:Aleak}
\end{equation}
where $\ell_{i,t}$ is the share of intermediate outflows in a gross
row-use proxy for node $i$. This allows row sums to be strictly below
one and embeds a controlled amount of dissipation at each step. We use
$\Aleak_t$ as the main propagation operator in the simulation.

The \emph{max-row-normalised} matrix is
\begin{equation}
  \Amax_{ij,t} \;=\; \frac{z_{ij,t}}{\max_{i}\,\sum_{k} z_{ik,t}},
  \label{eq:Amax}
\end{equation}
which avoids imposing a stochastic transition structure altogether and
preserves cross-node heterogeneity in row totals. We report it as a
robustness operator.

Let $\rho(\cdot)$ denote the spectral radius. By construction
$\rho(\Ashare_t) \approx 1$, while $\rho(\Aleak_t)$ and $\rho(\Amax_t)$
are bounded away from one for our calibration. Reporting all three is
essential to ensure that the avalanche dynamics analysed below is not a
mechanical consequence of operating on a unit-radius operator.

\subsection{Hall-like transversal stress}

Let $B_t \geq 0$ denote the intensity of an external field acting on
the network at time $t$, and define for each node $i$:
$I_{i,t}$ a measure of relative flow intensity (its share of total
flows in the network), $D_{i,t}$ a measure of redundancy or output
substitutability, and $C_{i,t}$ a measure of absorptive capacity. The
\emph{Hall-like transversal stress} on node $i$ is
\begin{equation}
  H_{i,t} \;=\; \frac{B_t\, I_{i,t}}
                     {D_{i,t}\,C_{i,t} + \varepsilon},
  \label{eq:Hall}
\end{equation}
where $\varepsilon > 0$ is a small regularisation constant. Equation
\eqref{eq:Hall} has the structure of a stress-conversion function: an
external field $B_t$ becomes \emph{transversal} stress when it acts on
high-flow, low-redundancy, low-capacity nodes. Defining the structural
resistance
\begin{equation}
  R_{i,t} \;=\; \frac{1}{D_{i,t}\,C_{i,t} + \varepsilon},
  \label{eq:Rdef}
\end{equation}
we can write $H_{i,t} = B_t\,I_{i,t}\,R_{i,t}$, i.e. a multiplicative
combination of loading, exposure and structural resistance. The
relative Hall-like exposure is
$\Hrel_{i,t} = H_{i,t} / \sum_{j} H_{j,t}$.

The mathematical structure of \eqref{eq:Hall} mirrors that of the
classical Hall coefficient in solid-state physics: the magnitude of
the transverse response is proportional to the product of the field
and the carrier flow, and inversely proportional to the carrier
density. In our economic re-interpretation, the carrier density is
replaced by the product $D_{i,t}\,C_{i,t}$ of redundancy and
absorptive capacity, which plays the role of a structural buffer
analogous to charge density in the physical case. This is a structural
analogy, not a physical identity: there is no claim that economic
flows obey Lorentz-force kinematics, nor that $H_{i,t}$ has a
dimensional interpretation as a voltage. The analogy is useful only
insofar as it organises the multiplicative interaction between
external loading, node exposure and structural buffering in a single
expression, and only insofar as it makes the comparative statics
\eqref{eq:Hall} transparent: $\partial H/\partial B$ is the
field-elasticity, $\partial H/\partial I$ is the exposure-elasticity,
and $\partial H/\partial(D C)$ is the buffer-elasticity. All three are
well defined and signed, and they reappear in Propositions
\ref{P1}--\ref{P3} below.

\subsection{Stress accumulation, toppling and avalanches}

Let $s_{i,t}$ denote the cumulative stress at node $i$ at time $t$ and
$x_{i,t}$ a node-specific idiosyncratic shock. Stress evolves as
\begin{equation}
  s_{i,t+1} \;=\;
  (1-\delta)\,s_{i,t}
  \;+\; \alpha\,x_{i,t}
  \;+\; \beta \sum_{j} \Aleak_{ji,t}\,s_{j,t}
  \;+\; \gamma\,H_{i,t},
  \label{eq:stress}
\end{equation}
where $\delta\in(0,1)$ is the dissipation rate, $\alpha\geq0$ scales
idiosyncratic shocks, $\beta\geq0$ scales network propagation through
the leakage-adjusted operator, and $\gamma\geq0$ scales Hall-like
loading. Node $i$ topples when
\begin{equation}
  s_{i,t} \;\geq\; \theta_i,
  \label{eq:topple}
\end{equation}
where $\theta_i$ is a node-specific threshold. The avalanche size at
time $t$ is the number of toppling nodes,
\begin{equation}
  S_t \;=\; \sum_{i} \mathbf{1}\!\left\{ s_{i,t} \geq \theta_i\right\}.
  \label{eq:Sdef}
\end{equation}
When a node topples, a fraction of its excess stress is redistributed
to its neighbours through $\Aleak_t$ and the remainder is dissipated.
This dissipative redistribution prevents explosive trajectories and
makes the dynamics finite-size and computationally well behaved.

\subsection{Hypotheses and propositions}

We state three substantive hypotheses about the behaviour of the
Hall-Sandpile model and three structural propositions implied by
\eqref{eq:Hall} and \eqref{eq:topple}.

\begin{hypothesis}[Joint determination of avalanche risk]
\label{H1}
Avalanche risk is not determined by the external field alone, but by
the joint interaction between field intensity, flow exposure,
redundancy, capacity and network topology:
\[
  \mathrm{Avalanche\ risk}\;=\;f\!\bigl(B_t,\,I_{i,t},\,D_{i,t}^{-1},
  \,C_{i,t}^{-1},\,A_t\bigr).
\]
The same shock can remain local in a redundant network and become
systemic in a low-redundancy one.
\end{hypothesis}

\begin{hypothesis}[Monotonicity in field and exposure]
\label{H2}
Transversal stress is increasing in field intensity and flow exposure:
$\partial H_{i,t}/\partial B_t > 0$ and
$\partial H_{i,t}/\partial I_{i,t} > 0$.
\end{hypothesis}

\begin{hypothesis}[Buffering by redundancy and capacity]
\label{H3}
Redundancy and absorptive capacity reduce transversal stress:
$\partial H_{i,t}/\partial D_{i,t} < 0$ and
$\partial H_{i,t}/\partial C_{i,t} < 0$.
Thus, redundancy and capacity do not eliminate shocks; they reduce
their conversion into systemic stress.
\end{hypothesis}

\begin{proposition}[Hall-adjusted threshold]
\label{P1}
Substituting \eqref{eq:Hall} into \eqref{eq:stress} and rearranging
\eqref{eq:topple} yields a Hall-adjusted activation threshold
\[
  \theta_i^{H} \;=\; \theta_i \;-\; \gamma\,H_{i,t},
\]
with $\partial \theta_i^{H}/\partial H_{i,t} = -\gamma < 0$.
Transversal stress lowers the effective activation threshold of the
node, with no change to the topology.
\end{proposition}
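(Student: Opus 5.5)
The plan is to separate, in the stress recursion \eqref{eq:stress}, the Hall-like loading from all other contributions and to move it to the threshold side of the toppling rule \eqref{eq:topple}. For fixed $t$, write the pre-loading stress of node $i$ as
\[
  \tilde s_{i,t+1} \;=\; (1-\delta)\,s_{i,t} \;+\; \alpha\,x_{i,t} \;+\; \beta \sum_{j} \Aleak_{ji,t}\,s_{j,t},
\]
so that $s_{i,t+1} = \tilde s_{i,t+1} + \gamma H_{i,t}$. The term $\tilde s_{i,t+1}$ does not depend on $H_{i,t}$ at date $t$: it collects dissipation, idiosyncratic shocks and longitudinal propagation through $\Aleak_t$. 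Substituting \eqref{eq:Hall} gives $\gamma H_{i,t} = \gamma B_t I_{i,t} R_{i,t}$, which is a node-specific additive shift known once $(B_t, I_{i,t}, D_{i,t}, C_{i,t})$ are fixed.

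The toppling condition $s_{i,t+1}\ge\theta_i$ is then rewritten as $\tilde s_{i,t+1} + \gamma H_{i,t} \ge \theta_i$. Since $\gamma H_{i,t}$ is a finite real number, this is equivalent to $\tilde s_{i,t+1} \ge \theta_i - \gamma H_{i,t} =: \theta_i^{H}$. The set of toppling nodes, and hence $S_{t+1}$ in \eqref{eq:Sdef}, is therefore identical whether one uses the raw stress against $\theta_i$ or the Hall-free stress against $\theta_i^H$. This equivalence is what makes $\theta_i^H$ an effective activation threshold.

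The derivative statement follows because $\theta_i^H$ is affine in $H_{i,t}$ with slope $-\gamma$. Strict negativity requires $\gamma>0$; if $\gamma=0$ the inequality is weak and the threshold is unchanged, so I would state that case explicitly. Combining with the signs in Hypotheses \ref{H2}--\ref{H3}, the chain rule gives the corollary that $\theta_i^H$ falls with $B_t$ and $I_{i,t}$ and rises with $D_{i,t}$ and $C_{i,t}$. None of this touches $A_t$, so the topology is unchanged, as claimed.

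The main obstacle is timing and the redistribution step. $H_{i,t}$ enters $s_{i,t+1}$, and within-period toppling redistributes excess stress through $\Aleak_t$. This makes $\tilde s$ for neighbours depend indirectly on Hall loading at other nodes. I would handle this by stating the proposition at the level of the single-node activation rule: condition on the incoming stress from neighbours at the given relaxation step, and note that the same additive-shift argument applies at each sub-step of the avalanche. Indirect effects of other nodes' $H_{j,t}$ enter $\tilde s_{i}$ and do not alter the form of $\theta_i^H$ for node $i$.
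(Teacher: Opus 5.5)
Your proposal is correct and follows essentially the paper's own argument: you isolate the non-Hall stress $\tilde s$, move $\gamma H_{i,t}$ to the threshold side of the toppling inequality, and read off the slope $-\gamma$ from the affine form of $\theta_i^H$. Your lagged timing, in which $H_{i,t}$ shifts the threshold faced by $\tilde s_{i,t+1}$, matches the more careful derivation in Appendix~\ref{app:threshold}, and stating the $\gamma=0$ case explicitly is a sensible refinement of the paper's ``$\gamma>0$ by assumption'' (the model only imposes $\gamma\geq 0$).
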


\begin{proof}
Let $\tilde{s}_{i,t} \equiv s_{i,t} - \gamma H_{i,t}$ denote the
non-Hall component of cumulative stress, i.e.\ the part of
\eqref{eq:stress} that excludes the contemporaneous Hall loading.
Substituting into the toppling condition $s_{i,t} \geq \theta_i$ gives
$\tilde{s}_{i,t} + \gamma H_{i,t} \geq \theta_i$, equivalently
$\tilde{s}_{i,t} \geq \theta_i - \gamma H_{i,t} \equiv \theta_i^H$.
Differentiation gives $\partial\theta_i^H/\partial H_{i,t} = -\gamma <
0$ since $\gamma > 0$ by assumption. The topology, encoded in
$\Aleak_t$, does not appear in this expression, so the threshold
adjustment is a node-level effect of the loading.
\end{proof}

\begin{proposition}[Field--resistance complementarity]
\label{P2}
Writing $H_{i,t} = B_t\,I_{i,t}\,R_{i,t}$ as in
\eqref{eq:Rdef}, the cross-partial derivative is
\[
  \frac{\partial^{2} H_{i,t}}{\partial B_t\,\partial R_{i,t}}
  \;=\; I_{i,t} \;\geq\; 0.
\]
External fields and structural resistance are complements: a given
field generates more transversal stress when it acts on a structurally
resistant node.
\end{proposition}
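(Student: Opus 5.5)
The plan is to treat $H_{i,t}$ as a function of the three factors $(B_t, I_{i,t}, R_{i,t})$ through the factorisation $H_{i,t} = B_t\,I_{i,t}\,R_{i,t}$ obtained from \eqref{eq:Hall} and \eqref{eq:Rdef}. We regard these three factors as independent arguments and take $D_{i,t}$ and $C_{i,t}$ to enter only through $R_{i,t}$. The first step is to check that this reparametrisation is legitimate. Since $\varepsilon>0$ and $D_{i,t},C_{i,t}\geq 0$, the denominator $D_{i,t}C_{i,t}+\varepsilon$ is strictly positive. Hence $R_{i,t}\in(0,1/\varepsilon]$ is well defined and finite, and the identity $H_{i,t}=B_t I_{i,t} R_{i,t}$ holds exactly, not just approximately.

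The second step is the computation itself, carried out in the order $B_t$ then $R_{i,t}$. Holding $I_{i,t}$ and $R_{i,t}$ fixed gives $\partial H_{i,t}/\partial B_t = I_{i,t}R_{i,t}$. Differentiating this with respect to $R_{i,t}$, holding $B_t$ and $I_{i,t}$ fixed, gives $\partial^2 H_{i,t}/\partial B_t\,\partial R_{i,t} = I_{i,t}$. Because $H$ is a polynomial (trilinear) function of its three arguments, it is $C^\infty$, so the mixed partials are equal by Schwarz's theorem and the order of differentiation does not matter.

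The third step is the sign. By definition $I_{i,t}$ is node $i$'s share of total network flows, namely a ratio of nonnegative WIOD intermediate flows. So $I_{i,t}\in[0,1]$ and in particular $I_{i,t}\geq 0$. The inequality is strict whenever node $i$ carries positive flow.

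The complementarity interpretation then follows from the standard definition: a nonnegative cross-partial means the marginal effect of the field, $\partial H/\partial B = I R$, is nondecreasing in $R$. Equivalently, a given increment in $B_t$ produces more transversal stress at nodes with larger $R_{i,t}$, that is, with smaller buffer $D_{i,t}C_{i,t}$.

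The main obstacle is interpretive rather than computational. The statement should make explicit that differentiating in $R_{i,t}$ treats resistance as a primitive, rather than differentiating through $D$ and $C$. I would add a remark translating the result back to the primitives via the chain rule. Since $\partial R/\partial D = -C\,R^2$, we get $\partial^2 H/\partial B\,\partial D = -I C R^2 \leq 0$, and symmetrically for $C$. This is consistent with Hypothesis~\ref{H3}: the field and the buffers are substitutes, which is the same complementarity statement expressed in terms of the buffers.
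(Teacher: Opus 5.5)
Your proposal is correct and follows the paper's proof. You factor $H_{i,t}=B_t I_{i,t} R_{i,t}$, differentiate first in $B_t$ and then in $R_{i,t}$ to obtain $I_{i,t}$, and sign the result by the nonnegativity of flow shares, with strict positivity when node $i$ has positive flow. Your additions are sound: the well-definedness of $R_{i,t}$, the Schwarz symmetry argument, and the chain-rule translation $\partial^2 H_{i,t}/\partial B_t\,\partial D_{i,t} = -I_{i,t}C_{i,t}R_{i,t}^2\leq 0$. One small point: your bound $I_{i,t}\leq 1$ is not automatic under the two-directional definition in Appendix~\ref{app:proxies}, which counts both inflows and outflows. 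The argument never uses that bound, though, so nothing breaks.
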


\begin{proof}
From \eqref{eq:Rdef} and \eqref{eq:Hall}, $H_{i,t} = B_t I_{i,t}
R_{i,t}$. Then $\partial H_{i,t}/\partial B_t = I_{i,t} R_{i,t}$ and
$\partial^2 H_{i,t}/(\partial B_t\,\partial R_{i,t}) = I_{i,t}$. Since
$I_{i,t} \geq 0$ by construction (relative flow shares are
non-negative), the cross-partial is non-negative; it is strictly
positive whenever node $i$ has positive flow.
\end{proof}

\begin{proposition}[Joint regime onset]
\label{P3}
Avalanche regimes emerge only when field intensity \emph{and}
redundancy stress increase jointly. High $B_t$ alone or high redundancy
stress alone are not sufficient. A regime transition appears in the
$(B_t,\,\text{redundancy stress})$ plane along a positively sloped
frontier.
\end{proposition}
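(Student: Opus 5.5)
The plan is to reduce the proposition to a comparison between a stationary stress level and the threshold, using Proposition~\ref{P1} and the dissipative structure of \eqref{eq:stress}. Then I would read the three claims of Proposition~\ref{P3} off that comparison.

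\emph{Step 1: parametrise redundancy stress.} I write redundancy stress as a scalar $\sigma\in[0,\sigma_{\max}]$ with $D_{i}(\sigma)=D_i^0(1-\sigma)$, where $\sigma_{\max}<1$. Then \eqref{eq:Rdef} gives $R_i(\sigma)$ increasing in $\sigma$, and $H_i(B,\sigma)=B\,I_i\,R_i(\sigma)$. By Hypotheses~\ref{H2}--\ref{H3} and Proposition~\ref{P2}, $H_i$ is increasing in each argument, with cross-partial
\[
\partial^2 H_i/\partial B\,\partial\sigma = I_i\,R_i'(\sigma)\geq 0 .
\]

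\emph{Step 2: bound the stress ignoring toppling.} Assume $\beta\,\rho(\Aleak)<\delta$, which the reported $\rho(\Aleak)\approx 0.33$ makes plausible for the calibrated $\beta$ and $\delta$. Under this assumption the linear part of \eqref{eq:stress} is a contraction. Ignoring toppling, the mean stress converges to
\[
\bar s(B,\sigma)=\bigl(\delta I-\beta \Aleak{}^{\top}\bigr)^{-1}\bigl(\alpha\bar x+\gamma H(B,\sigma)\bigr).
\]
Its fluctuations are controlled by the variance of $x$. The resolvent has nonnegative entries, so $\bar s_i$ is increasing in $(B,\sigma)$. By Proposition~\ref{P1}, a node becomes likely to topple, and hence to seed avalanches that spread through $\Aleak$, when $\bar s_i(B,\sigma)$ approaches $\theta_i$. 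I therefore define the onset set as
\[
\mathcal{A}=\{(B,\sigma):\ \#\{i:\bar s_i(B,\sigma)+\kappa\geq\theta_i\}\geq k\},
\]
where $\kappa$ is a noise margin and $k$ is the finite-size event level, for example $S\geq 5$.

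\emph{Step 3: necessity of joint increase.} I evaluate the bound on the two edges of the parameter rectangle. On the edge $\sigma=0$, $B\le B_{\max}$, the worst case is
\[
\gamma B_{\max}\max_i I_i/(D_i^0C_i+\varepsilon).
\]
On the edge $B$ small, $\sigma\le\sigma_{\max}$, the worst case is controlled by $B_{\min}R_i(\sigma_{\max})$. If both quantities, after passing through the resolvent, leave a gap below $\min_i(\theta_i-\kappa)-\alpha\bar x$ terms, then $\mathcal{A}$ misses both edges. This shows that high $B$ alone and high $\sigma$ alone are insufficient. Because $H$ is multiplicative with a positive cross-partial, the corner $(B_{\max},\sigma_{\max})$ can still lie in $\mathcal{A}$, which is the \emph{and} in the statement.

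\emph{Step 4: the frontier.} By monotonicity $\mathcal{A}$ is an upper set. Its boundary is therefore a monotone curve that separates the low--low corner from the high--high corner and is crossed along the positive diagonal. This is the sense in which the transition appears along a positively oriented frontier in the $(B_t,\text{redundancy stress})$ plane, matching the ordering of the four regimes.

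The main obstacle is exactly this last point. A literal level set $B\,R(\sigma)=\text{const}$ has negative slope $dB/d\sigma<0$. So the positive-slope claim must be understood as the orientation of the regime band, with its normal having positive components, rather than the slope of an iso-risk curve. I would first try to make this precise by showing that the width of the critical-transition band grows along the diagonal. Failing that, I would state the frontier claim in this oriented sense and verify it against the Monte Carlo phase diagram. A secondary difficulty is that toppling redistribution breaks linearity. I would handle it by noting that redistribution through $\Aleak$ is itself monotone, since more stress upstream never decreases stress downstream, so the upper-set property survives.
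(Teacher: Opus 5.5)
Your argument is sound to the extent the statement can be proved at all, and it follows a different route from the paper. The paper's sketch works with node-level toppling probabilities $p_i(B_t,\sigma_D)=\Pr\{s_{i,t}+\gamma H_{i,t}(B_t,\sigma_D)\geq\theta_i\}$. It asserts that these are nondecreasing with a nonnegative cross-partial ``inherited from the multiplicative form'' of $H_{i,t}$, writes $\E[S_t]\approx\sum_i p_i$ plus an unspecified amplification term, and reads the frontier off the locus $\{\E[S_t]=\bar S\}$. You instead get monotonicity from the stationary mean of the linear part, through the entrywise nonnegative resolvent $(\delta I-\beta(\Aleak)^{\top})^{-1}=\delta^{-1}\sum_{m\geq 0}(\beta/\delta)^m((\Aleak)^{\top})^m$. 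This is exactly where the contraction condition of Lemma~\ref{L1} enters, and after it you use only the upper-set property. That gives a real account of network propagation and avoids the paper's supermodularity claim, which is not automatic. With $c=I_{i,t}/(D_{i,t}C_{i,t}+\varepsilon)$, $u=\theta_i-\gamma cB_t\sigma_D$ and $f$ the density of $s_{i,t}$, one has $\partial^2p_i/\partial B_t\,\partial\sigma_D=\gamma cf(u)-\gamma^2c^2B_t\sigma_Df'(u)$, and this turns negative as $p_i$ saturates. Your route also deals with the ``alone is not sufficient'' clause, which the paper's sketch never argues. Making it a corner condition on a bounded rectangle is unavoidable: at fixed redundancy stress $H_{i,t}$ is linear in $B_t$, so any fixed $k$ is eventually reached by raising $B_t$ alone. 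The cost is that $\mathcal{A}$ is a proxy built from the no-toppling mean and an unspecified margin $\kappa$. To carry Steps 3--4 over to $\E[S_t]$ and $\Pr(S\geq k)$ you still need two things. First, an order-preserving toppling map. This holds if a toppled node is cut back to $\theta_i$ and only the excess is redistributed, as the paper's wording suggests, but it fails for a reset-to-zero rule. So your justification must address the toppled node itself, not only its neighbours. Second, a comparison-plus-concentration bound on the two edges. On the slope you are right, and the paper's own sketch concedes the point: it derives that $B_t$ and $\sigma_D$ ``trade off with negative slope'' along iso-avalanche curves, then asserts ``positively sloped phase frontiers''. The boundary of an upper set in the rectangle is the graph of a nonincreasing function. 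Your oriented reading (regimes ordered along the south-west to north-east diagonal) is therefore the only consistent one. State it directly, since a widening critical band would not yield a positive slope anyway.
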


\begin{proof}[Sketch]
Let $\sigma_D > 0$ scale the inverse of redundancy via $D_{i,t}
\mapsto D_{i,t}/\sigma_D$, so that effective redundancy decreases as
$\sigma_D$ rises. Then $H_{i,t}$ becomes
$H_{i,t}(B_t,\sigma_D) = \sigma_D B_t I_{i,t} / (D_{i,t} C_{i,t} +
\varepsilon)$. Define the toppling probability of node $i$ at the
non-cascade margin as $p_i(B_t,\sigma_D) = \Pr\{ s_{i,t} +
\gamma H_{i,t}(B_t,\sigma_D) \geq \theta_i\}$. This probability is
non-decreasing in both arguments, with cross-partial
$\partial^2 p_i/(\partial B_t \partial \sigma_D) \geq 0$ inherited
from the multiplicative form of $H_{i,t}$. The expected avalanche size
satisfies $\E[S_t] = \sum_i p_i(B_t, \sigma_D)$ to first order, plus a
network-amplification term proportional to the spectral radius of the
contagion operator that activates only when local toppling has
positive probability. Hence the locus
$\{(B_t,\sigma_D): \E[S_t] = \bar{S}\}$ is an iso-avalanche curve
along which $B_t$ and $\sigma_D$ trade off with negative slope; the
threshold-event probabilities $\Pr(S \geq k)$ inherit the same
geometry, producing positively sloped phase frontiers in
$(B_t,\sigma_D)$ space.
\end{proof}

We complement these propositions with a lemma on the boundedness of
the stress process, which guarantees that the simulated dynamics is
well posed and finite-size in the sense used throughout the paper.

\begin{lemma}[Boundedness of the stress process]
\label{L1}
Assume $\delta \in (0,1)$, the propagation operator $\Aleak_t$ has
spectral radius $\rholeak < 1$, and the idiosyncratic shocks $x_{i,t}$
and Hall loadings $H_{i,t}$ are bounded uniformly in $i,t$. Then for
$\beta < \delta/\rholeak$, the stress process $\{s_{i,t}\}$ defined
by \eqref{eq:stress} admits a unique stationary distribution with
finite moments, and the avalanche size $S_t$ is uniformly bounded by
$|V|$.
\end{lemma}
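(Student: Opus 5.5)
The plan is to write \eqref{eq:stress} in vector form and treat it as a linear time-invariant recursion driven by a bounded input. Stacking $s_t=(s_{i,t})_{i\in V}$, we get
\[
  s_{t+1} \;=\; M\,s_t + u_t,\qquad M \equiv (1-\delta)I + \beta\,(\Aleak)^{\top},\qquad u_t \equiv \alpha x_t + \gamma H_t .
\]
By hypothesis $u_t$ is uniformly bounded, $\|u_t\|_\infty\le \bar u$. I will first treat $\Aleak_t\equiv\Aleak$ as time-invariant, which is the case for the 2014 simulation substrate. A time-varying operator would require a joint spectral radius bound; I would only remark on this.

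\textbf{Step 1 (contraction).} The key step is to show $\rho(M)<1$. Since $(1-\delta)I$ commutes with $(\Aleak)^\top$, the eigenvalues of $M$ are $1-\delta+\beta\lambda$ with $\lambda$ ranging over the spectrum of $\Aleak$, and $|\lambda|\le\rholeak$. This gives
\[
  \rho(M)\;\le\;1-\delta+\beta\rholeak .
\]
The right-hand side is $<1$ exactly when $\beta<\delta/\rholeak$, which is the stated condition.

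I expect the main obstacle to be here. First, the triangle inequality step is fine. Second, Perron--Frobenius applies because $\Aleak\ge0$ entrywise, so $M$ is nonnegative and $\rho(M)=1-\delta+\beta\rholeak$ exactly; the condition is therefore sharp. Third, $\rho(M)<1$ does not give a contraction in $\|\cdot\|_\infty$ directly. I would instead use the standard fact that for any $\eta>0$ there is an induced norm with $\|M\|\le\rho(M)+\eta$. Choosing $\eta$ small yields $\|M^k\|\le K r^k$ for some $r<1$. Note that $\rholeak<1$ is not actually needed beyond the $\beta$ condition, but it does ensure the admissible range $\beta<\delta/\rholeak$ contains $\beta\le\delta$.

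\textbf{Step 2 (bounded trajectories, stationary law).} Iterating the recursion gives $s_t=M^t s_0+\sum_{k<t}M^k u_{t-1-k}$, so
\[
  \sup_t\|s_t\|\le K\|s_0\|+K\bar u/(1-r).
\]
All moments of $s_t$ are therefore finite uniformly in $t$.

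For the stationary distribution, I assume the shocks $(x_t)$ are i.i.d., or more generally stationary, and that $H_t$ is constant or stationary; the paper holds $B_t$ fixed within a Monte Carlo run. Under this assumption the series $s^\ast_t=\sum_{k\ge0}M^k u_{t-1-k}$ converges almost surely and in every $L^p$. It is a stationary solution with bounded support. Uniqueness follows by coupling: two solutions driven by the same inputs satisfy $\|s_t-s'_t\|\le K r^t\|s_0-s'_0\|\to0$. Equivalently, the Markov chain is an iterated random function system with uniform contraction, so it has a unique invariant law (Diaconis--Freedman).

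The toppling redistribution adds a nonnegative transfer along $\Aleak$ plus dissipation, so the fraction passed on is $\le$ the row sums of $\Aleak$. I would absorb it as a further bounded, sub-stochastic perturbation of the same form. This keeps the contraction estimate intact as long as the redistribution coefficient is folded into $\beta$.

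\textbf{Step 3 (avalanche bound).} By \eqref{eq:Sdef}, $S_t$ is a sum of $|V|$ indicators, so $0\le S_t\le|V|$ deterministically. This bound holds irrespective of Steps 1--2.
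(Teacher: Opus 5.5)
Your proposal is correct and follows the same route as the paper's sketch. You rewrite \eqref{eq:stress} as $s_{t+1}=Ms_t+u_t$ with $M=(1-\delta)I+\beta(\Aleak)^{\top}$, bound $\rho(M)\le 1-\delta+\beta\rholeak<1$, let bounded forcing deliver a unique stationary law with finite moments, and note $S_t\le|V|$ trivially. Your adapted-norm step, the explicit stationarity assumption on the inputs, the coupling argument for uniqueness and the time-invariance caveat fill in points the paper leaves implicit.

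The one loose point is the toppling paragraph. The lemma, stated for the linear recursion \eqref{eq:stress}, does not require it. It also cannot be justified by folding a threshold-triggered, nonlinear transfer into $\beta$. If you want to cover toppling, use a norm in which $M$ contracts and the toppling map is nonexpansive at the same time, e.g.\ an $\ell^1$ norm weighted by a positive (perturbed) Perron vector of $\Aleak$.
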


\begin{proof}[Sketch]
Equation \eqref{eq:stress} defines a vector autoregression
$s_{t+1} = M s_t + u_t$ with $M = (1-\delta) I + \beta (\Aleak_t)^\top$
and forcing $u_t = \alpha x_t + \gamma H_t$. The spectral radius of $M$
satisfies $\rho(M) \leq (1-\delta) + \beta \rholeak$, which is strictly
less than one whenever $\beta < \delta/\rholeak$. Under this condition
the homogeneous part is a contraction, and bounded forcing implies a
unique invariant distribution with finite moments
\citep{vespignani1998how}. The avalanche size $S_t \leq |V|$
trivially since $S_t$ counts toppling nodes from a finite set.
\end{proof}

\begin{corollary}[Finite-size avalanches]
\label{C1}
Under the assumptions of Lemma~\ref{L1}, the avalanche distribution
has finite mean and variance for every $(B_t, \sigma_D)$ in the
domain considered, and there is no explosive trajectory. Heavy-tailed
event distributions, when they appear, are heavy-tailed only up to the
finite size $|V|$ of the network.
\end{corollary}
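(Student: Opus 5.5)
The corollary follows almost entirely from Lemma~\ref{L1} together with the trivial bound $0\le S_t\le |V|$. I would organise the argument in three steps.

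\emph{Step 1: moments of $S_t$.} Since $S_t=\sum_{i\in V}\mathbf{1}\{s_{i,t}\ge\theta_i\}$ is a sum of $|V|$ indicators, we have $0\le S_t\le |V|$ pathwise. This holds for every $t$, every realisation and every parameter pair $(B_t,\sigma_D)$. It gives $\E[S_t]\le |V|$ and $\mathrm{Var}(S_t)\le \E[S_t^2]\le |V|^2$. The same bounds hold under the stationary law from Lemma~\ref{L1}, because the stationary avalanche variable is a measurable function of the stationary stress vector and is again a sum of $|V|$ indicators. So finite mean and variance need nothing beyond finiteness of $V$.

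\emph{Step 2: uniformity over the $(B_t,\sigma_D)$ domain.} I need the hypotheses of Lemma~\ref{L1} to hold throughout the domain considered. Under the rescaling used in the proof of Proposition~\ref{P3},
\[
\gamma H_{i,t}=\gamma\,\sigma_D B_t I_{i,t}R_{i,t}.
\]
Here $I_{i,t}\le 1$ because it is a share, and $R_{i,t}\le 1/\varepsilon$. Hence $|\gamma H_{i,t}|\le \gamma\,\sigma_D \bar B/\varepsilon$, uniformly in $i$ and $t$, on any bounded grid $\sigma_D\le\bar\sigma$, $B_t\le \bar B$. This makes the Hall forcing uniformly bounded. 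The contraction condition $\beta<\delta/\rholeak$ does not involve $(B_t,\sigma_D)$ at all, so it holds everywhere once it holds at one point.

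\emph{Step 3: no explosive trajectory.} I would use the representation $s_{t+1}=Ms_t+u_t$ from the proof of Lemma~\ref{L1}. Choose a norm in which $\|M\|\le q<1$; such a norm exists because $\rho(M)<1$. Iterating then gives
\[
\|s_t\|\le q^t\|s_0\|+\sup_k\|u_k\|/(1-q).
\]
This bound is finite and uniform in $t$, so trajectories stay in a bounded set. The toppling redistribution only moves a fraction of the excess stress through $\Aleak_t$ and dissipates the remainder, so it is also non-expansive. The final sentence of the corollary is then interpretive: any empirical tail $\Pr(S\ge k)$ vanishes identically for $k>|V|$, so any apparent power law is truncated at $|V|$.

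The main obstacle is Step 3. The toppling and redistribution rule is not included in the linear map $M$, which makes Lemma~\ref{L1} only a sketch for the full dynamics. I would first try to show that a toppling event changes $\|s_t\|_1$ by at most a nonpositive amount plus a redistribution term bounded through $\rholeak<1$, so that the dynamics remain dominated by the contraction. If that fails, I would state the explicit assumption that toppling removes at least the excess $s_{i,t}-\theta_i$ from the toppling node.
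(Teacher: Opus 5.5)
Your Step~1 and the linear half of Step~3 are exactly what the paper relies on. The corollary has no separate proof and is read off from Lemma~\ref{L1}: $0\le S_t\le|V|$ gives finite mean and variance and truncates any tail at $|V|$, and contraction of $M=(1-\delta)I+\beta(\Aleak)^\top$ gives non-explosion. The paper's sketch of Lemma~\ref{L1} treats \eqref{eq:stress} as the whole state recursion, so the toppling redistribution is not handled there either. Relative to the paper nothing is missing. The problems are in the two steps you added.

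Step~2 cannot be carried out from the design. The grid bounds $\bar B$, not the realised field $B_t=\bar B+\xi_t$, which carries a Gaussian perturbation; $x_{i,t}$ is also half-normal. So the domain does not give a uniform bound on $\gamma H_{i,t}$. That bound has to be taken as a hypothesis, which is what ``under the assumptions of Lemma~\ref{L1}'' supplies. The moment claims for $S_t$ need no forcing bound at all.

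Your plan for toppling mixes norms. Suppose $M$ contracts by $q$ in some norm $\|\cdot\|_*$ and toppling is non-expansive in $\|\cdot\|_1$. Then the composite step is only bounded by $c\,q$, where $c\ge1$ is a norm-equivalence constant, and $c\,q$ can exceed one. Nor does $\|\cdot\|_1$ work for $M$ itself. On the nonnegative cone, $\mathbf{1}^\top Ms=(1-\delta)\mathbf{1}^\top s+\beta\sum_j\ell_{j,t}s_j$, which contracts only if $\beta\max_j\ell_{j,t}<\delta$, and $\beta<\delta/\rholeak$ does not imply that. The spectral radius $\rholeak$ controls a single step only in a norm adapted to $\Aleak$. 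Assuming that a toppling node loses its excess does not repair this on its own.

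A fix: pick $q\in(\rholeak,\min\{1,\delta/\beta\})$ and set $w=(I-\Aleak/q)^{-1}\mathbf{1}\ge\mathbf{1}$, so that $\Aleak w\le q\,w$.
\begin{itemize}
\item Stresses stay nonnegative: coefficients and forcing are nonnegative, $s_{i,0}=0$, and toppling leaves a node at $\theta_i\ge0$.
\item On that cone, $w^\top Ms\le\kappa\,w^\top s$ with $\kappa=(1-\delta)+\beta q<1$.
\item A toppling at $i$ that removes the excess $e_i$ and sends a fraction $\phi\le1$ of it along row $i$ of $\Aleak$ changes $w^\top s$ by $-e_i\bigl(w_i-\phi(\Aleak w)_i\bigr)\le-e_iw_i(1-\phi q)\le0$.
\end{itemize}
Hence $w^\top s_{t+1}\le\kappa\,w^\top s_t+w^\top u_t$ and $\sup_t w^\top s_t\le\sup_t w^\top u_t/(1-\kappa)$. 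This closes the gap you identified, both in your Step~3 and in the paper's sketch.
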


The structural propositions and the lemma jointly motivate the
phase-diagram exercise reported in Section~\ref{sec:results}: regime
transitions are well posed, finite-size, and driven by the joint
movement of $\bar{B}$ and $\sigma_D$.

\begin{remark}
The Hall analogy is structural and not literal. Equation \eqref{eq:Hall}
does not equate $H_{i,t}$ with a physical voltage; it formalises the
intuition that an external field acting on exposed flows can generate
stress in a dimension different from the original flow, with magnitude
shaped by the inverse of redundancy and capacity.
\end{remark}

%%==================================================================%%
\section{Real-network computational design}
\label{sec:design}

\subsection{Data}

We use the World Input--Output Database \citep{timmer2015illustrated,
timmer2016anatomy}, restricted to the 2000--2014 release. WIOD is a
publicly available multi-region input--output dataset that records
intermediate transactions between country--sector production units. In
our processed sample, each annual network has $n = 2{,}283$
country--sector nodes after harmonisation. The yearly panel is used to
construct propagation operators and Hall-like exposure measures over
time; the 2014 network is used as the substrate for the main avalanche
simulation. No proprietary data are used.

\subsection{Construction of network indicators}

For each year $t$ we compute the three propagation operators in
\eqref{eq:Ashare}--\eqref{eq:Amax}, their spectral radii
$\rhoshare_t$, $\rholeak_t$ and $\rhomax_t$, and the leakage profile
$\{\ell_{i,t}\}$. For each node $i$ we compute relative flow intensity
$I_{i,t}$, redundancy $D_{i,t}$ from out-degree concentration, and
absorptive capacity $C_{i,t}$ from a normalised redundancy proxy. We
fix $\varepsilon$ at a small positive constant and compute $H_{i,t}$
and $\Hrel_{i,t}$ following Section \ref{sec:model}. Table
\ref{tab:network_panel} summarises the resulting yearly indicators.

\subsection{Simulation protocol}

The Hall-Sandpile simulation runs on the 2014 WIOD network. For each
configuration of external field intensity $B_t$ and redundancy stress
$\sigma_D$, we draw idiosyncratic shocks $x_{i,t}$ from a node-specific
distribution, apply Hall-like loading according to \eqref{eq:Hall},
update stress via \eqref{eq:stress} and determine avalanche size from
\eqref{eq:Sdef}. Field intensity follows
$B_t = \bar{B} + \xi_t$ with $\xi_t$ a mean-zero perturbation;
redundancy stress is implemented as a controlled multiplicative
reduction of effective redundancy, $D_{i,t} \mapsto D_{i,t}/\sigma_D$,
which scales the contribution of $H_{i,t}$ in \eqref{eq:stress}.

We run two complementary experiments. \emph{Baseline scenarios} fix
four representative pairs $(\bar{B},\sigma_D)$ corresponding to
\textsc{stable absorption}, \textsc{latent fragility},
\textsc{critical transition} and \textsc{avalanche regime}, with
$15{,}000$ Monte Carlo periods each. The \emph{phase diagram}
experiment sweeps $\bar{B}$ over a regular grid in $[0.25,2.0]$ and
$\sigma_D$ over $[0.5,2.5]$ and computes mean avalanche size, the
share of non-zero events, the percentile statistics of $S_t$ and the
threshold-event probabilities $\Pr(S\geq 5)$, $\Pr(S\geq 10)$ and
$\Pr(S\geq 20)$. Throughout, $\Aleak$ is the main propagation operator
and $\Ashare$, $\Amax$ are reported only as spectral-radius
robustness checks.

\subsection{Parameter calibration}
\label{sec:design:calib}

The Hall-Sandpile dynamics in \eqref{eq:stress} depends on a small
set of parameters: the dissipation rate $\delta$, the idiosyncratic
shock scale $\alpha$, the propagation coefficient $\beta$, the Hall
loading coefficient $\gamma$, the toppling thresholds
$\{\theta_i\}_{i\in V}$, and the regularisation $\varepsilon$ in
\eqref{eq:Hall}. Table~\ref{tab:params} summarises the calibrated
values used throughout the paper. The choices are guided by three
principles: (i) the contraction condition of Lemma~\ref{L1},
$\beta < \delta/\rholeak$, must hold to ensure boundedness of the
stress process; (ii) the time scale of mean reversion, controlled by
$\delta$, should be of the order of one to a few simulation periods,
following standard practice in dissipative sandpile models
\citep{vespignani1998how}; and (iii) the magnitudes of $\alpha$,
$\beta$, $\gamma$ should be of comparable order so that no single
forcing term mechanically dominates \eqref{eq:stress}.

\begin{table}[t]
\centering
\caption{Calibrated parameter values used in the Hall-Sandpile
simulation on the WIOD 2014 substrate.}
\label{tab:params}
\small
\begin{tabular}{l c l}
\toprule
Parameter & Value & Role \\
\midrule
$\delta$         & $0.20$            & Dissipation rate \\
$\alpha$         & $0.30$            & Idiosyncratic-shock scale \\
$\beta$          & $0.40$            & Network propagation \\
$\gamma$         & $0.50$            & Hall loading coefficient \\
$\theta_i$       & $1.00$            & Uniform toppling threshold \\
$\varepsilon$    & $10^{-6}$         & Regularisation in \eqref{eq:Hall} \\
$x_{i,t}$        & $\mathcal{N}^{+}(0,\sigma_x^2)$ & Idiosyncratic shock distribution \\
$\sigma_x$       & $0.20$            & Idiosyncratic-shock std.\ deviation \\
$\bar{B}$        & $\{0.25,\ldots,2.00\}$ & Field intensity grid \\
$\sigma_D$       & $\{0.50,\ldots,2.50\}$ & Redundancy-stress grid \\
$\xi_t$          & $\mathcal{N}(0,\sigma_B^2)$ & Field perturbation \\
$\sigma_B$       & $0.10\bar{B}$     & Field-perturbation std.\ deviation \\
\bottomrule
\end{tabular}
\end{table}

For our calibration, $\rholeak = 0.334$ in 2014 (Table
\ref{tab:network_panel}); the contraction condition becomes
$\beta < 0.20/0.334 \approx 0.599$, and our value $\beta = 0.40$ lies
comfortably below this threshold. Sensitivity analyses with
$\delta\in\{0.10, 0.15, 0.25\}$, $\beta\in\{0.30, 0.50\}$ and
$\gamma\in\{0.30, 0.70\}$ leave the qualitative ordering of the four
regimes unchanged; we report a subset of these robustness experiments
in Appendix~\ref{app:robust}.

The toppling thresholds are set uniformly at $\theta_i = 1$ across all
nodes. Heterogeneous thresholds (e.g.\ scaled by node-level capacity)
shift the location of the phase frontier in $(\bar{B},\sigma_D)$ space
but preserve its shape; this aligns with Proposition~\ref{P3}, which
predicts a positively sloped frontier irrespective of the threshold
profile. The regularisation $\varepsilon=10^{-6}$ prevents division
by zero in \eqref{eq:Hall} for nodes with $D_{i,t}C_{i,t}=0$ (a small
number of WIOD service-sector nodes have zero recorded outgoing
intermediate flows and therefore zero raw redundancy in our
construction); the resulting numerical stability does not affect the
qualitative results.

\subsection{Monte Carlo protocol}
\label{sec:design:mc}

Each cell of the phase diagram involves a Monte Carlo experiment with
the following protocol. \emph{Burn-in.} We discard the first
$T_{\mathrm{burn}}=50$ simulation periods to remove the transient from
the initial condition $s_{i,0}=0$. Figure~\ref{fig:meanpath} shows
that the burn-in is conservative: stationary levels of mean avalanche
size are essentially achieved within the first ten periods.
\emph{Stationary phase.} We retain the next $T_{\mathrm{stat}}=150$
periods, during which avalanche sizes $S_t$ are recorded and
percentile statistics computed. \emph{Replications.} For the baseline
scenarios we run $R = 100$ independent replications per scenario,
yielding $T_{\mathrm{stat}} \times R = 15{,}000$ stationary
observations of $S_t$. For the phase-diagram cells, we run $R = 50$
replications, yielding $7{,}500$ stationary observations per cell.
\emph{Random seeds.} Each replication uses an independent random seed
drawn from a deterministic master seed, ensuring reproducibility.
\emph{Convergence diagnostics.} For each cell of the phase diagram we
compute the standard error of the mean avalanche size across
replications; standard errors of $\E[S_t]$ are below $5\%$ of the
point estimate in all active cells of the grid, and below $0.02$ in
all absorbing cells. The threshold-event probabilities $\Pr(S\geq k)$
are estimated as sample frequencies; for $k=5,10,20$, the
binomial standard error is bounded by $\sqrt{p(1-p)/N_{\mathrm{eff}}}$
with $N_{\mathrm{eff}} \geq 7{,}500$.

The total number of simulated periods across the entire phase grid
($90$ cells, $50$ replications, $200$ periods including burn-in) is
$9 \times 10^{5}$, which we found computationally tractable on a
single workstation using a vectorised Python implementation.

\subsection{Interpretive map}

Table \ref{tab:interpretive} summarises the correspondence between
modelling objects and economic interpretations. The substrate is the
observed WIOD network; the loading is a controlled simulation. We do
not attempt to identify any historical shock.

\begin{table}[t]
\centering
\caption{Interpretive map of model objects.}
\label{tab:interpretive}
\small
\begin{adjustbox}{max width=\columnwidth}
\begin{threeparttable}
\begin{tabular}{p{2.6cm} p{2.5cm} p{2.6cm}}
\toprule
Object & Definition & Interpretation\\
\midrule
Real network substrate & WIOD intermediate I-O network & Observed
production topology \\
$\rholeak_t$ & Spectral radius of dissipative operator & Shock
persistence under leakage \\
$\Hrel_{i,t}$ & Relative flow / redundancy$\times$capacity &
Transversal stress loading potential \\
Redundancy stress $\sigma_D$ & Controlled reduction of effective
redundancy & Movement toward criticality \\
External field $B_t$ & Synthetic field on real network & Energy,
financial, geopolitical or monetary loading \\
$S_t$ & Number of toppling nodes per period & Systemic activation \\
$\Pr(S\!\geq\!k)$ & Probabilities of discrete avalanche thresholds &
Finite-size systemic-event diagnostics \\
\bottomrule
\end{tabular}
\end{threeparttable}
\end{adjustbox}
\end{table}

%%==================================================================%%
\section{Results}
\label{sec:results}

\subsection{Propagation normalisation and dissipation}
\label{sec:res:spectral}

Figure \ref{fig:spectral} reports the three spectral radii
$\rhoshare_t$, $\rholeak_t$ and $\rhomax_t$ for the WIOD network over
2000--2014, and Table \ref{tab:network_panel} summarises the
corresponding yearly indicators. The row-share spectral radius lies in
the interval $[0.973,\,0.984]$ with little year-to-year variation: by
the construction in \eqref{eq:Ashare}, $\rhoshare_t$ approximates a
row-stochastic transition operator and is therefore close to one
mechanically. The leakage-adjusted spectral radius is sharply lower,
$\rholeak_t \in [0.334,\,0.344]$ across the entire panel, and the
max-row-normalised spectral radius rises from $0.219$ in 2000 to
$0.317$ in 2014. The leakage profile is stable around
$\bar{\ell}_t \approx 0.373$.

\begin{figure}[t]
\centering
\includegraphics[width=\columnwidth]{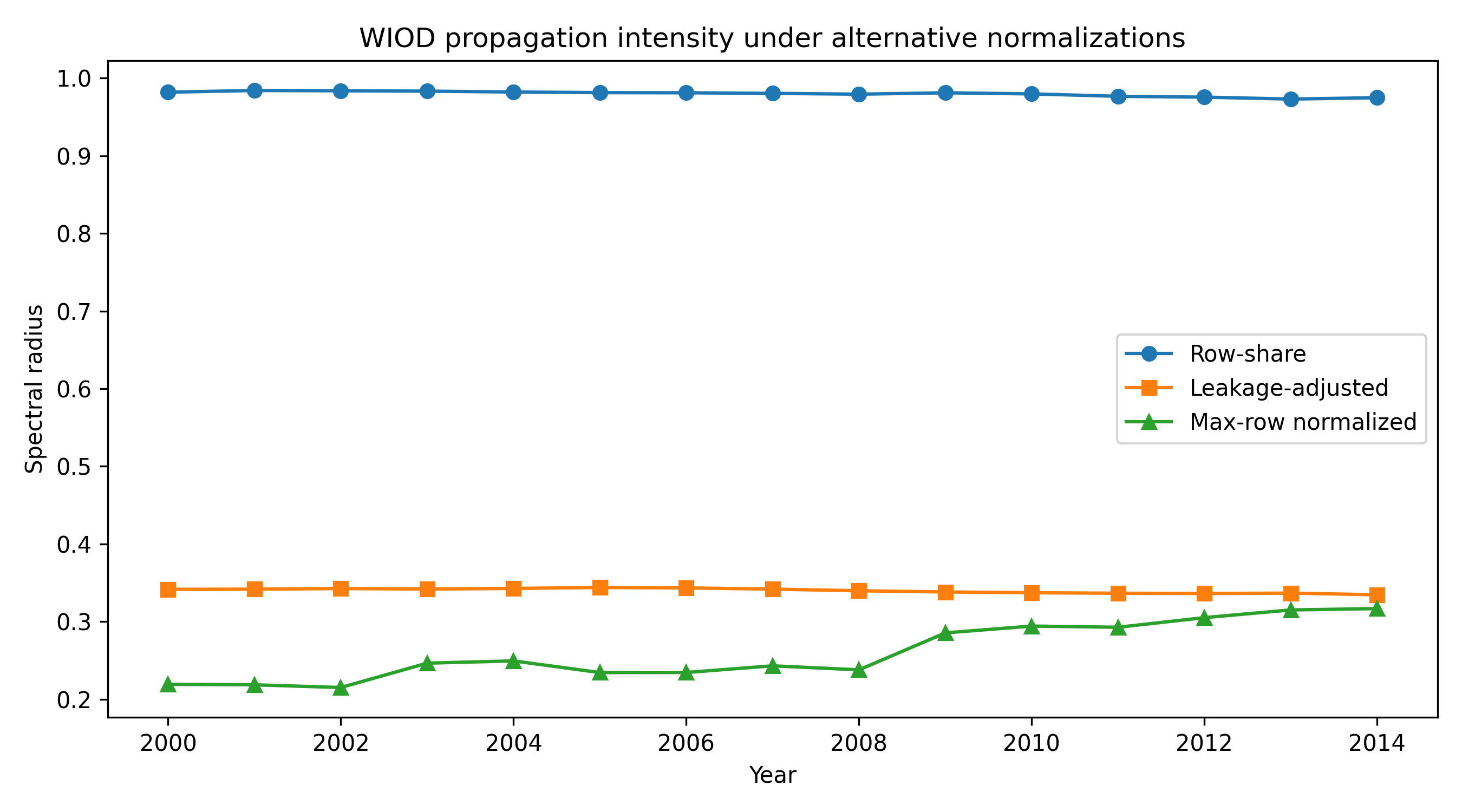}
\caption{Spectral radius of the WIOD propagation operators under
alternative normalisations, 2000--2014. Row-share normalisation is
close to one by construction; the leakage-adjusted operator is sharply
dissipative; the max-row-normalised operator rises after 2008.}
\label{fig:spectral}
\end{figure}

The economic content of this exercise is methodological. The
near-unit value of $\rhoshare_t$ is a property of the row-stochastic
normalisation: it does \emph{not} indicate that the world economy
operates close to spectral criticality. Reporting $\rholeak_t$ and
$\rhomax_t$ alongside $\rhoshare_t$ documents that the dissipative
operator we use in \eqref{eq:stress} has spectral radius safely below
unity throughout the sample. The avalanche dynamics analysed below
therefore cannot be explained by a unit-radius operator alone; they
arise from the interaction of dissipative propagation, idiosyncratic
shocks and Hall-like loading.

\begin{table*}[t]
\centering
\caption{Compact WIOD network panel, 2000--2014. Spectral radii under
the three normalisations, mean leakage and relative Hall-like exposure
statistics. Number of nodes is $n = 2{,}283$ in every year.}
\label{tab:network_panel}
\small
\begin{tabular}{c c c c c c c}
\toprule
Year & $\rhoshare$ & $\rholeak$ & $\rhomax$ & Mean $\ell$ &
Mean $\Hrel$ & P95 $\Hrel$ \\
\midrule
2000 & 0.982 & 0.342 & 0.219 & 0.372 & $1.04\times 10^{-3}$ & $4.22\times 10^{-3}$ \\
2001 & 0.984 & 0.342 & 0.219 & 0.372 & $1.04\times 10^{-3}$ & $4.04\times 10^{-3}$ \\
2002 & 0.984 & 0.343 & 0.215 & 0.372 & $1.04\times 10^{-3}$ & $3.99\times 10^{-3}$ \\
2003 & 0.983 & 0.342 & 0.246 & 0.373 & $1.03\times 10^{-3}$ & $4.11\times 10^{-3}$ \\
2004 & 0.982 & 0.343 & 0.249 & 0.373 & $1.03\times 10^{-3}$ & $4.05\times 10^{-3}$ \\
2005 & 0.981 & 0.344 & 0.234 & 0.373 & $1.04\times 10^{-3}$ & $4.18\times 10^{-3}$ \\
2006 & 0.981 & 0.343 & 0.234 & 0.373 & $1.06\times 10^{-3}$ & $4.24\times 10^{-3}$ \\
2007 & 0.980 & 0.342 & 0.243 & 0.373 & $1.06\times 10^{-3}$ & $4.13\times 10^{-3}$ \\
2008 & 0.979 & 0.340 & 0.238 & 0.374 & $1.08\times 10^{-3}$ & $4.07\times 10^{-3}$ \\
2009 & 0.981 & 0.338 & 0.285 & 0.374 & $1.11\times 10^{-3}$ & $4.15\times 10^{-3}$ \\
2010 & 0.980 & 0.337 & 0.294 & 0.373 & $0.99\times 10^{-3}$ & $3.53\times 10^{-3}$ \\
2011 & 0.977 & 0.336 & 0.293 & 0.374 & $1.01\times 10^{-3}$ & $3.37\times 10^{-3}$ \\
2012 & 0.976 & 0.336 & 0.305 & 0.374 & $1.06\times 10^{-3}$ & $3.56\times 10^{-3}$ \\
2013 & 0.973 & 0.337 & 0.315 & 0.374 & $1.07\times 10^{-3}$ & $3.28\times 10^{-3}$ \\
2014 & 0.975 & 0.334 & 0.317 & 0.374 & $1.12\times 10^{-3}$ & $3.45\times 10^{-3}$ \\
\bottomrule
\end{tabular}
\end{table*}

\subsection{Concentration of transversal exposure}
\label{sec:res:exposure}

Figure \ref{fig:exposure} shows the yearly mean, median and 95th
percentile of relative Hall-like exposure $\Hrel_{i,t}$ on a
logarithmic scale. The mean stays close to $10^{-3}$ throughout the
panel and the median is approximately one order of magnitude below the
mean. The 95th percentile lies between $3.3\times10^{-3}$ and
$4.2\times10^{-3}$, indicating that a small subset of nodes
concentrates a disproportionate share of the loading potential. The
maximum value of $\Hrel_{i,t}$ in 2014 reaches $0.317$, i.e. nearly
one third of the entire network's Hall-like exposure is concentrated
on a single node.

\begin{figure}[t]
\centering
\includegraphics[width=\columnwidth]{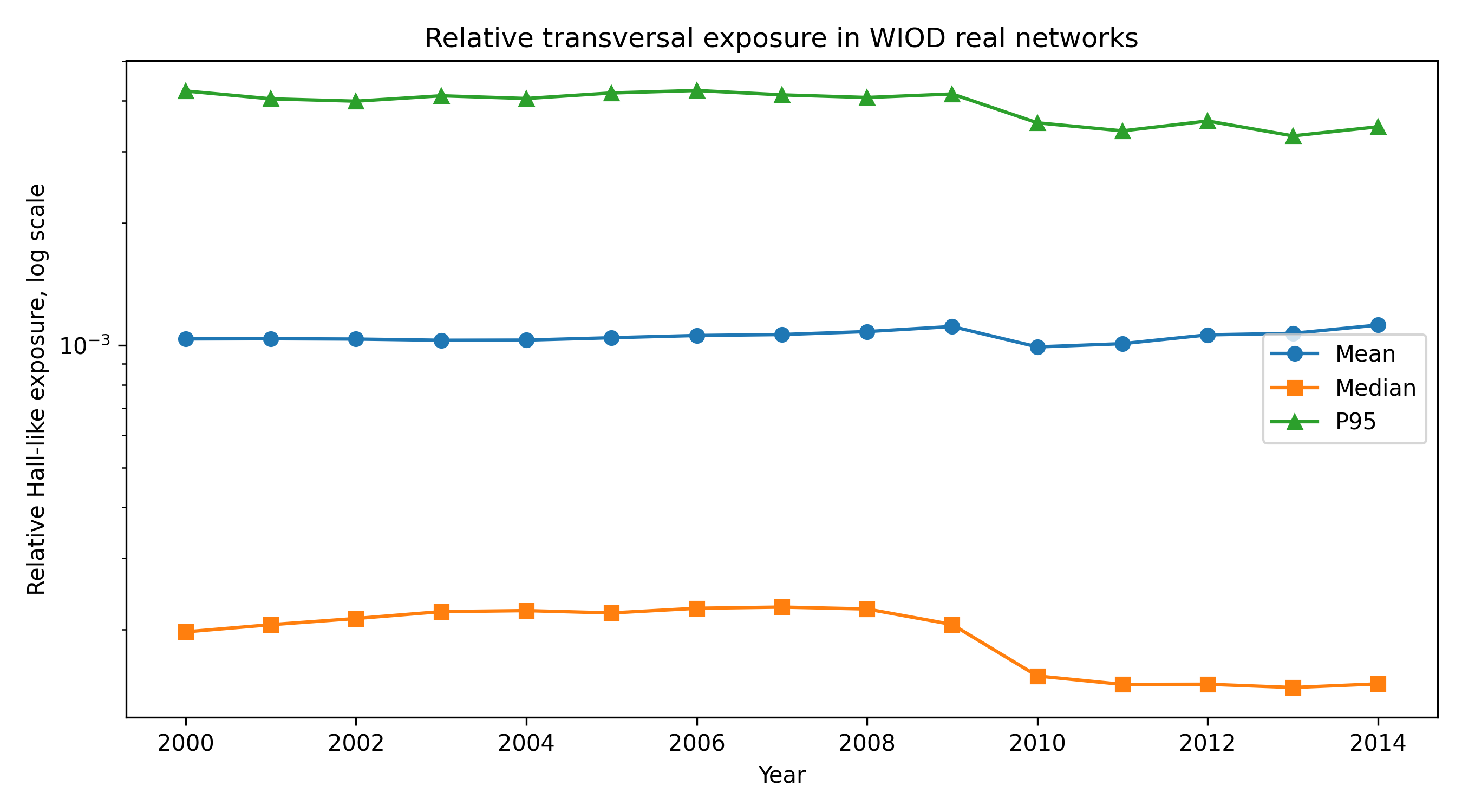}
\caption{Yearly statistics of relative Hall-like exposure
$\Hrel_{i,t}$, 2000--2014. The divergence between mean, median and
P95 indicates that transversal stress potential is highly concentrated
in the upper tail.}
\label{fig:exposure}
\end{figure}

Table \ref{tab:top_nodes} reports the top 15 nodes by $\Hrel$ in 2014
together with flow share and structural resistance. The top-ranked
node is \texttt{CHN\_E36} (Chinese water collection and supply), with
$\Hrel = 0.317$, followed by \texttt{CHN\_A02} (forestry and logging,
$0.087$) and \texttt{CHN\_A03} (fishing and aquaculture, $0.054$).
Manufacturing nodes such as \texttt{CHN\_C31\_C32} (furniture and
other manufacturing, $0.046$), \texttt{CHN\_C25} (fabricated metals,
$0.044$), \texttt{CHN\_C17} (paper products, $0.043$) and
\texttt{CHN\_C26} (computer and electronic products, $0.037$) follow.
Two patterns are worth noting. First, several of the top nodes are
utility/service sectors (water collection, administrative services,
telecommunications) that combine very high inflows with measured zero
outgoing redundancy and a structural-resistance value of approximately
$14.74$ in our calibration; this combination produces high
$\Hrel$ even though their downstream propagation potential is limited.
Second, the dominance of Chinese country--sector codes in the top
ranks reflects the size and density of those sectors in the WIOD 2014
network and is a feature of the network calibration, not a sectoral
diagnosis of national fragility. The interpretation is that, conditional
on the external field intensity considered in the simulation, the
loading of transversal stress is concentrated on a small set of
country--sector nodes; whether and how that loading translates into
realised cascades depends on the regime explored in Sections
\ref{sec:res:baseline}--\ref{sec:res:phase}.

\begin{table*}[t]
\centering
\caption{Top 15 country--sector nodes by relative Hall-like exposure
$\Hrel$ in WIOD 2014. Flow share is $I_{i,t}$; structural resistance
is $R_{i,t}$ from \eqref{eq:Rdef}. The full list is in the
Supplementary Material.}
\label{tab:top_nodes}
\small
\begin{tabular}{r l c l c c c}
\toprule
Rank & Node & Country & Sector & Flow share & Structural resistance &
$\Hrel$ \\
\midrule
1  & CHN\_E36     & CHN & Water collection \& supply       & 0.0215 & 14.74 & 0.317 \\
2  & CHN\_A02     & CHN & Forestry \& logging              & 0.0059 & 14.74 & 0.087 \\
3  & CHN\_A03     & CHN & Fishing \& aquaculture           & 0.0226 &  2.39 & 0.054 \\
4  & CHN\_C31\_C32& CHN & Furniture \& other manufacturing & 0.0141 &  3.28 & 0.046 \\
5  & JPN\_E36     & JPN & Water collection \& supply       & 0.0030 & 14.74 & 0.045 \\
6  & CHN\_C25     & CHN & Fabricated metals                & 0.0162 &  2.71 & 0.044 \\
7  & CHN\_C17     & CHN & Paper products                   & 0.0139 &  3.10 & 0.043 \\
8  & CHN\_C26     & CHN & Computer, electronic, optical    & 0.0249 &  1.49 & 0.037 \\
9  & CHN\_N       & CHN & Administrative \& support services & 0.0024 & 14.74 & 0.036 \\
10 & CHN\_C22     & CHN & Rubber \& plastics               & 0.0255 &  1.39 & 0.036 \\
11 & AUS\_E36     & AUS & Water collection \& supply       & 0.0024 & 14.74 & 0.035 \\
12 & IND\_E36     & IND & Water collection \& supply       & 0.0022 & 14.74 & 0.032 \\
13 & CHN\_C28     & CHN & Machinery \& equipment           & 0.0127 &  2.19 & 0.028 \\
14 & CHN\_J61     & CHN & Telecommunications               & 0.0018 & 14.74 & 0.027 \\
15 & CHN\_C18     & CHN & Printing \& recorded media       & 0.0139 &  1.86 & 0.026 \\
\bottomrule
\end{tabular}
\end{table*}

Figure \ref{fig:top_nodes} provides a visualisation of the upper-tail
concentration. The drop from \texttt{CHN\_E36} to the second-ranked
node is sharp (from $0.317$ to $0.087$), and the long descending tail
beyond the top dozen confirms that, in the 2014 calibration, the
Hall-like loading potential is heavy-tailed. Figure
\ref{fig:resist_hall} plots structural resistance against $\Hrel$ in
log scale: most nodes lie in the lower-left region with low structural
resistance and low exposure, while a small number of high-resistance
nodes (the utility/service sectors above) sit in the upper-right
corner. The relationship is not monotonic because flow share also
matters in \eqref{eq:Hall}: a structurally resistant node with no
inflows has zero exposure.

\begin{figure}[t]
\centering
\includegraphics[width=\columnwidth]{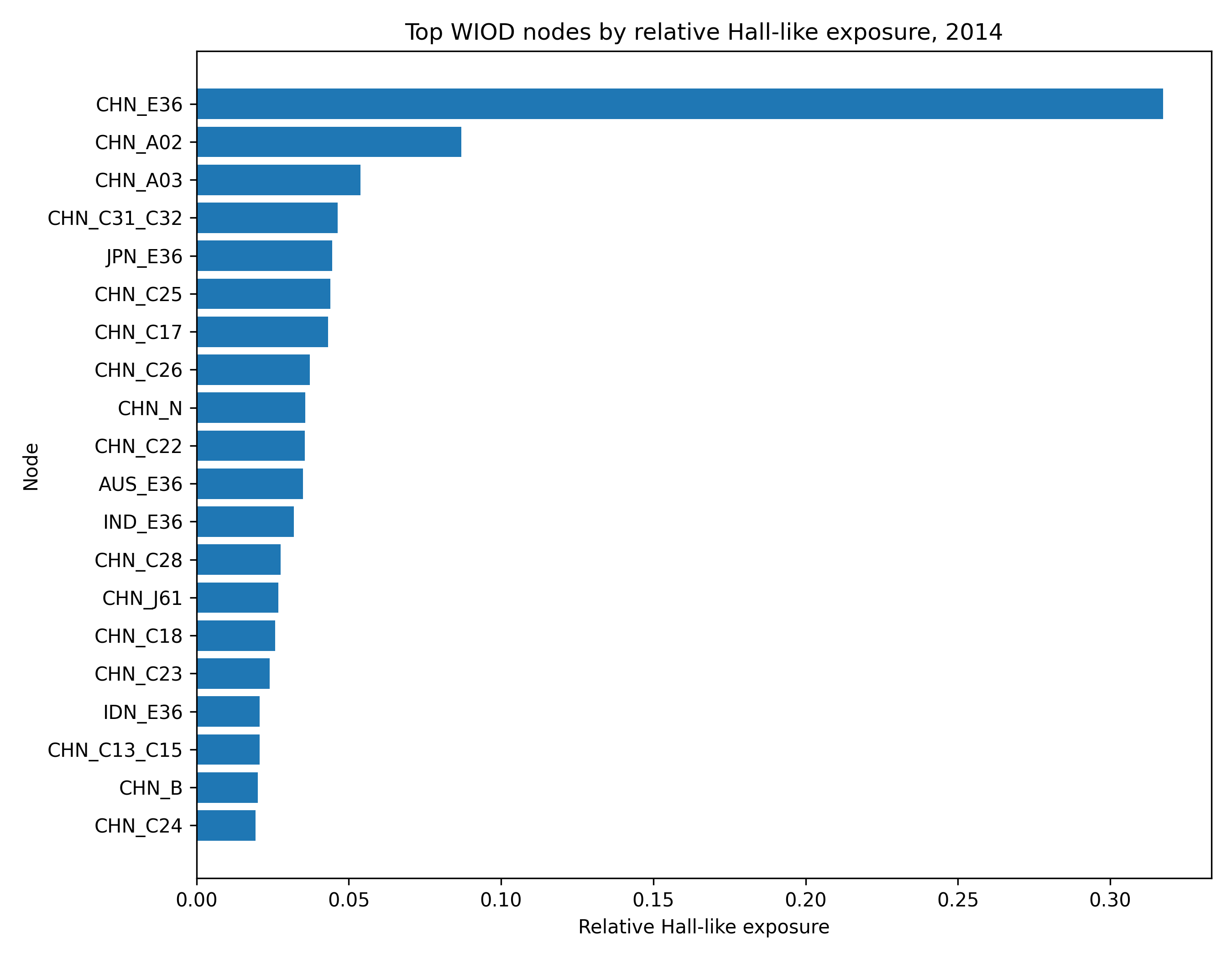}
\caption{Top WIOD nodes by relative Hall-like exposure $\Hrel$ in
2014. The exposure distribution is heavy-tailed; \texttt{CHN\_E36}
alone accounts for nearly one third of total relative exposure under
this calibration.}
\label{fig:top_nodes}
\end{figure}

\begin{figure}[t]
\centering
\includegraphics[width=\columnwidth]{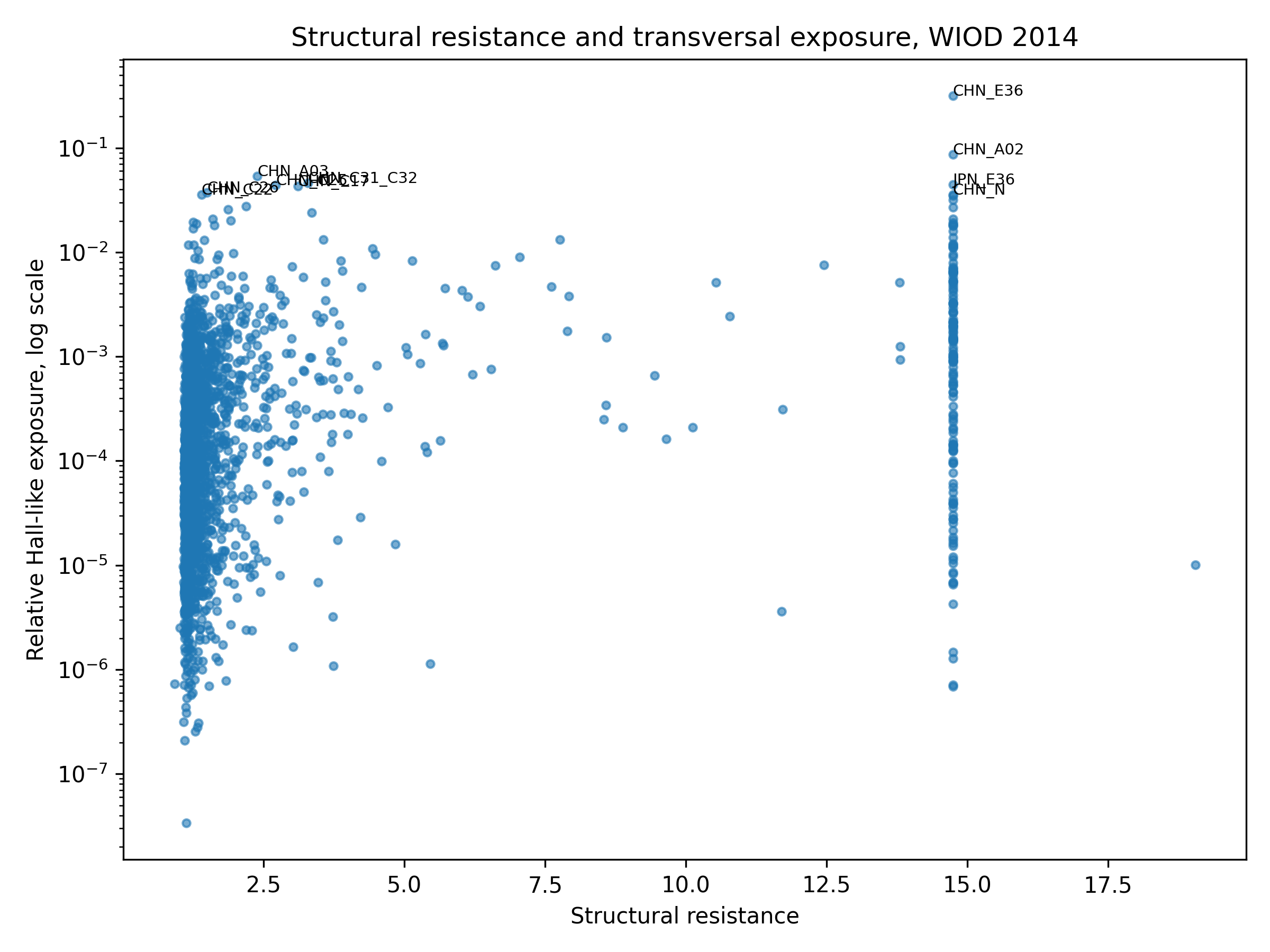}
\caption{Structural resistance versus relative Hall-like exposure for
WIOD 2014 (log scale on the vertical axis). Most nodes have low
resistance and low exposure; a small set of high-resistance nodes
appears in the upper-right region.}
\label{fig:resist_hall}
\end{figure}

\subsection{Regime transition in baseline scenarios}
\label{sec:res:baseline}

Table \ref{tab:scenarios} reports the four baseline scenarios on the
2014 WIOD network. Mean avalanche size grows monotonically across
regimes, from $0.084$ in \textsc{stable absorption} ($\bar{B}=0.45,
\sigma_D=0.7$) to $5.81$ in the \textsc{avalanche regime}
($\bar{B}=1.35, \sigma_D=2.3$). The share of non-zero events rises
from $8.4\%$ in stable absorption to $95.3\%$ in the avalanche regime.
The probability $\Pr(S \geq 5)$ is essentially zero in the first two
regimes, equal to $9.9\%$ in critical transition and $58.5\%$ in the
avalanche regime; $\Pr(S \geq 10)$ is non-trivial only in the last two
regimes, while $\Pr(S\geq 20)$ remains below $0.1\%$ even in the
avalanche regime, indicating that very large events are rare even
under aggressive joint loading.

\begin{table*}[t]
\centering
\caption{Baseline scenarios on the WIOD 2014 substrate
($n=2{,}283$, $15{,}000$ Monte Carlo periods per scenario).}
\label{tab:scenarios}
\small
\begin{tabular}{l c c c c c c c}
\toprule
Scenario & $\bar{B}$ & $\sigma_D$ & Mean $S$ & $\Pr(S{>}0)$ &
$\Pr(S\!\geq\!5)$ & $\Pr(S\!\geq\!10)$ & $\Pr(S\!\geq\!20)$ \\
\midrule
Stable absorption   & 0.45 & 0.7 & 0.084 & 0.084 & 0.000 & 0.000 & 0.000 \\
Latent fragility    & 0.70 & 1.4 & 0.489 & 0.465 & 0.000 & 0.000 & 0.000 \\
Critical transition & 1.00 & 1.8 & 2.036 & 0.807 & 0.099 & 0.001 & 0.000 \\
Avalanche regime    & 1.35 & 2.3 & 5.811 & 0.953 & 0.585 & 0.170 & 0.001 \\
\bottomrule
\end{tabular}
\end{table*}

Figure \ref{fig:meanpath} shows the average avalanche size as a
function of simulation time in each baseline regime. After a short
transient, every regime converges to a distinct stationary level of
avalanche activity: the stable-absorption path fluctuates near zero,
latent fragility settles around $\bar{S}\approx 0.5$, the critical
transition stabilises near $\bar{S}\approx 2$, and the avalanche
regime fluctuates around $\bar{S}\approx 5.5$--$6.5$. The dynamics is
not explosive: the dissipative redistribution prevents trajectories
from running away to network-wide events. The four regimes are
\emph{controlled stationary states} of the same model under different
loading intensities, not points on a divergent path.

\begin{figure}[t]
\centering
\includegraphics[width=\columnwidth]{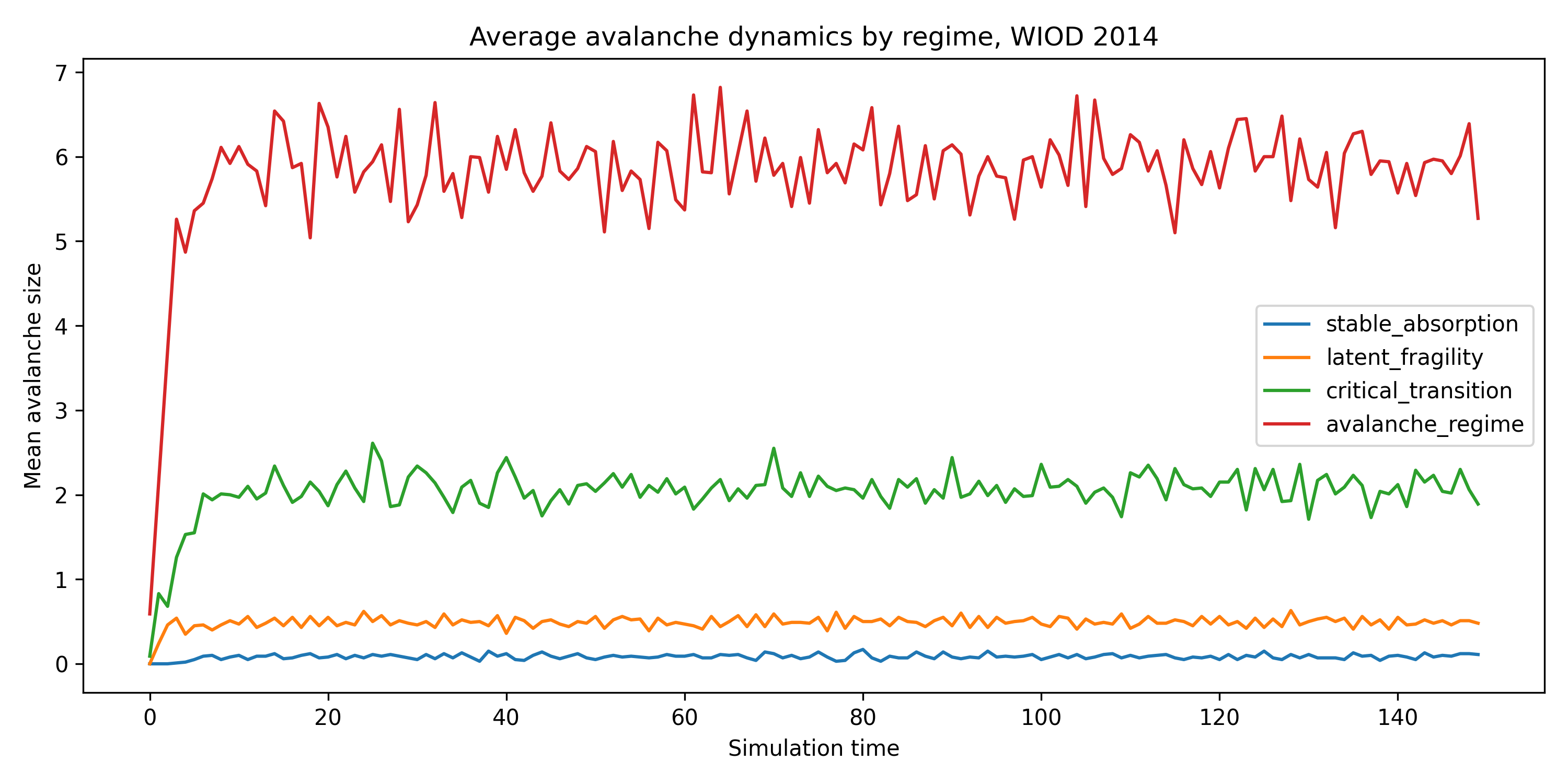}
\caption{Average avalanche dynamics by regime, WIOD 2014. After a
short transient each regime converges to a distinct stationary mean
avalanche level. The dynamics is dissipative and stable.}
\label{fig:meanpath}
\end{figure}

\subsection{Phase diagrams}
\label{sec:res:phase}

Figures \ref{fig:phase_mean}--\ref{fig:phase_S20} present the phase
diagrams over the $(\bar{B},\sigma_D)$ grid for the 2014 substrate.
Mean avalanche size (Figure \ref{fig:phase_mean}) increases
monotonically along both axes, reaching values near $13$ in the
upper-right corner (high field, high redundancy stress). The
probability $\Pr(S\geq 5)$ (Figure \ref{fig:phase_S5}) sets in
earliest, with non-trivial values from $\bar{B}\!\gtrsim\!1.0$ and
$\sigma_D\!\gtrsim\!1.5$, and approaches one in the upper-right corner.
The probability $\Pr(S\geq 10)$ (Figure \ref{fig:phase_S10}) requires
stronger joint loading and is the most informative threshold for the
onset of systemic activity. The probability $\Pr(S\geq 20)$ (Figure
\ref{fig:phase_S20}) is small everywhere except in the
upper-right corner, never exceeding $\sim\!7\%$. The model therefore
does not produce a permanent supercritical state under reasonable
loading.

\begin{figure}[t]
\centering
\includegraphics[width=\columnwidth]{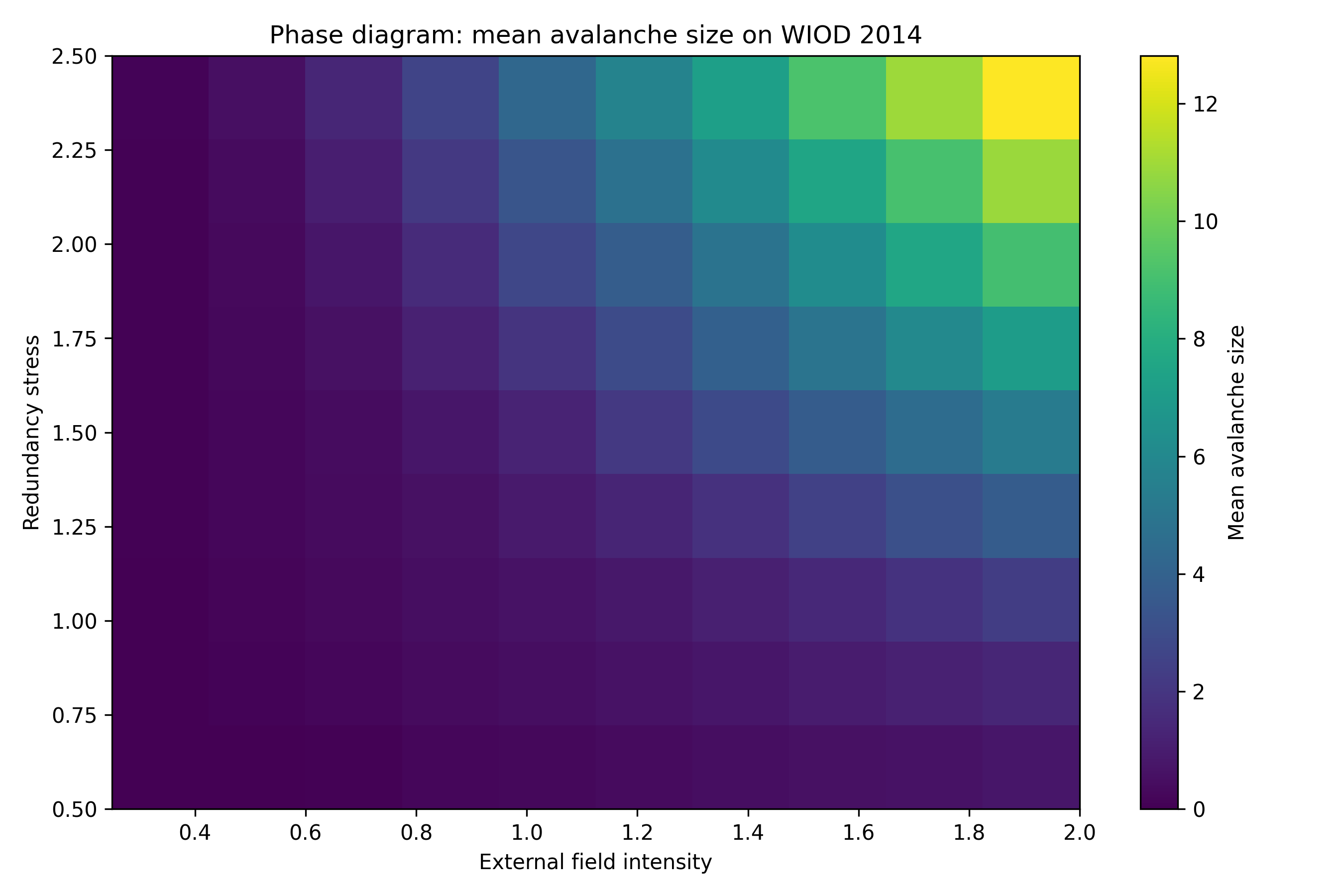}
\caption{Phase diagram of mean avalanche size on the WIOD 2014
substrate. Mean $S$ increases jointly with external field intensity
and redundancy stress.}
\label{fig:phase_mean}
\end{figure}

\begin{figure}[t]
\centering
\includegraphics[width=\columnwidth]{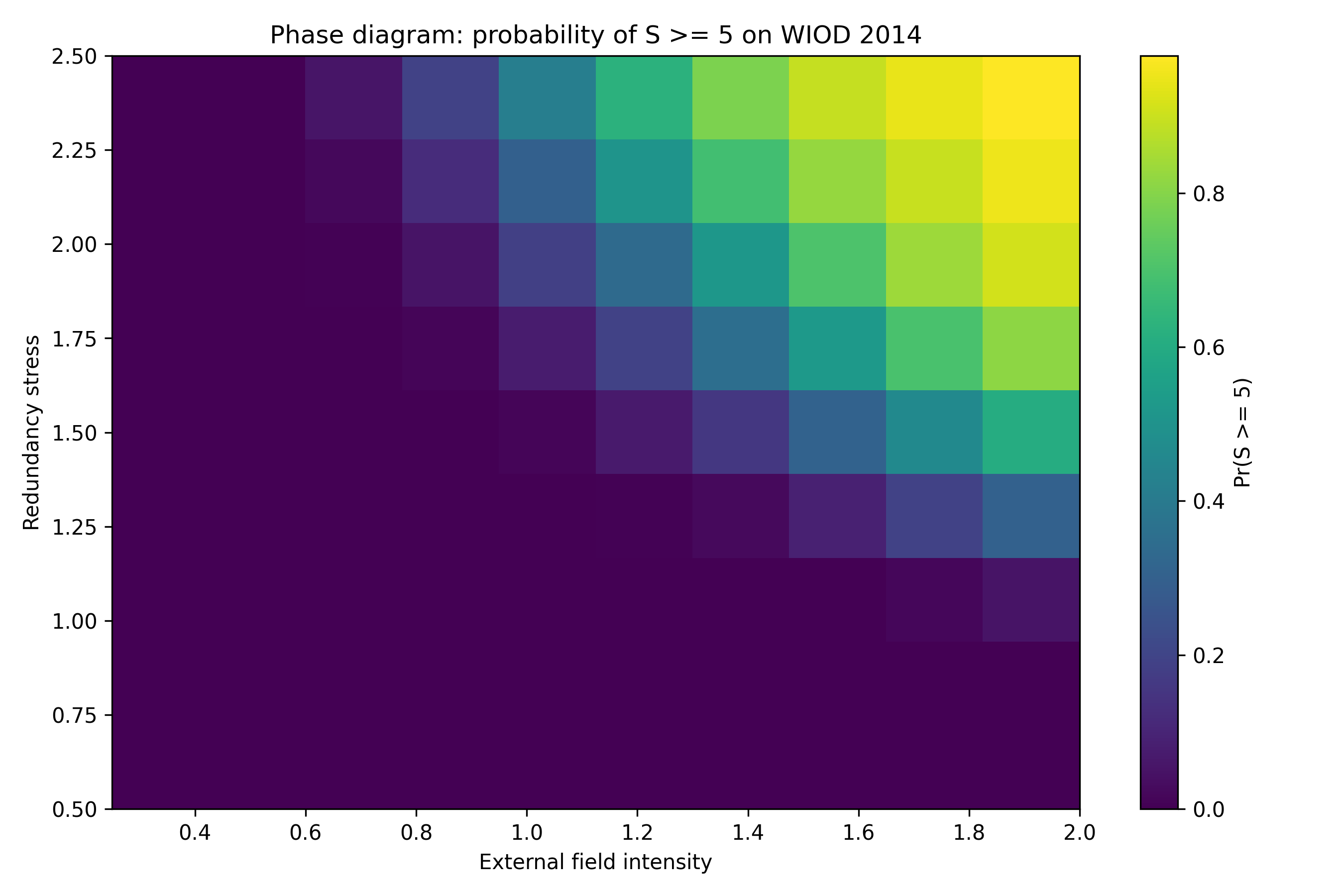}
\caption{Phase diagram of $\Pr(S\geq 5)$ on the WIOD 2014 substrate.
The transition frontier is visible along the south-west to north-east
diagonal.}
\label{fig:phase_S5}
\end{figure}

\begin{figure}[t]
\centering
\includegraphics[width=\columnwidth]{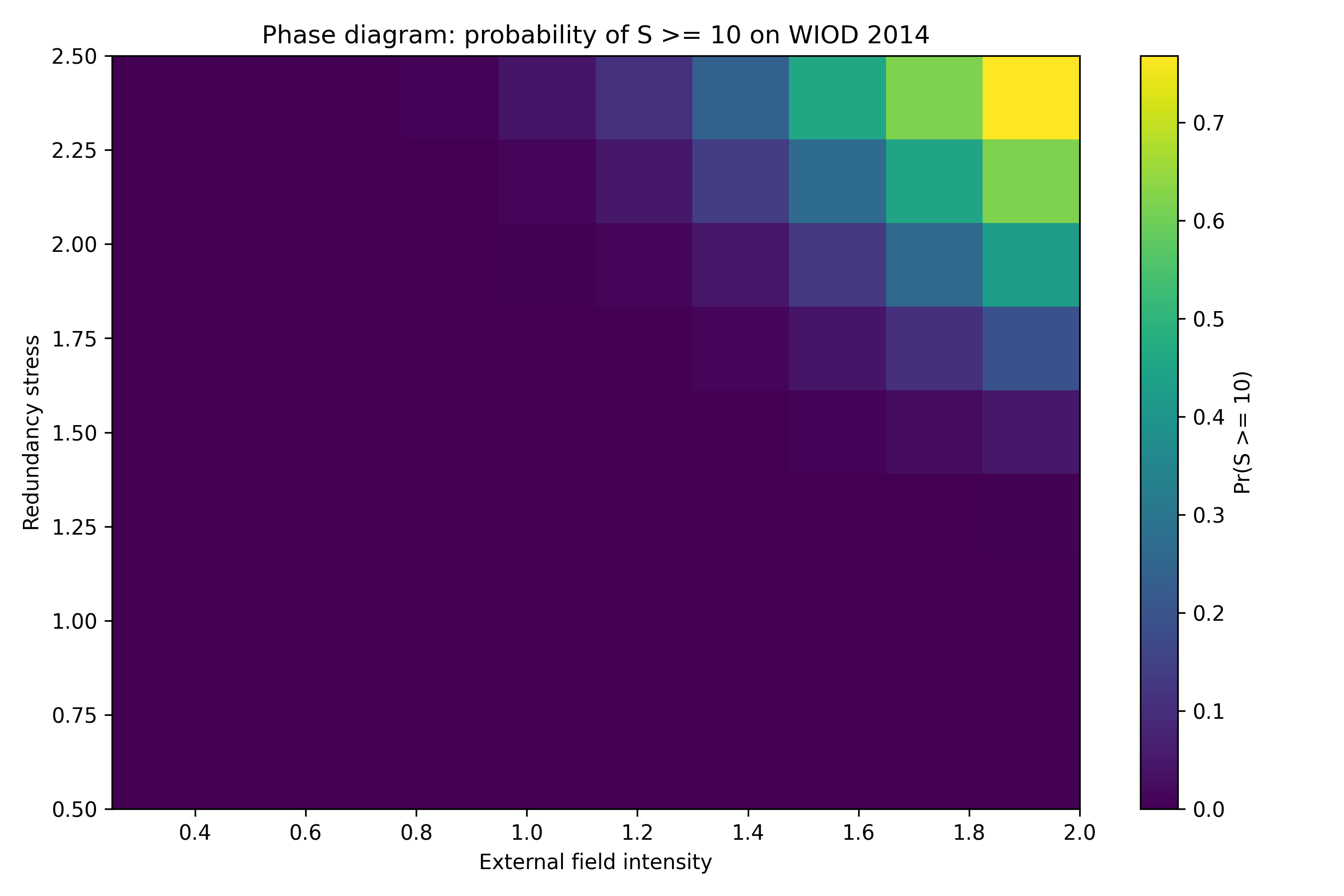}
\caption{Phase diagram of $\Pr(S\geq 10)$ on the WIOD 2014 substrate.
Larger systemic events require stronger joint loading.}
\label{fig:phase_S10}
\end{figure}

\begin{figure}[t]
\centering
\includegraphics[width=\columnwidth]{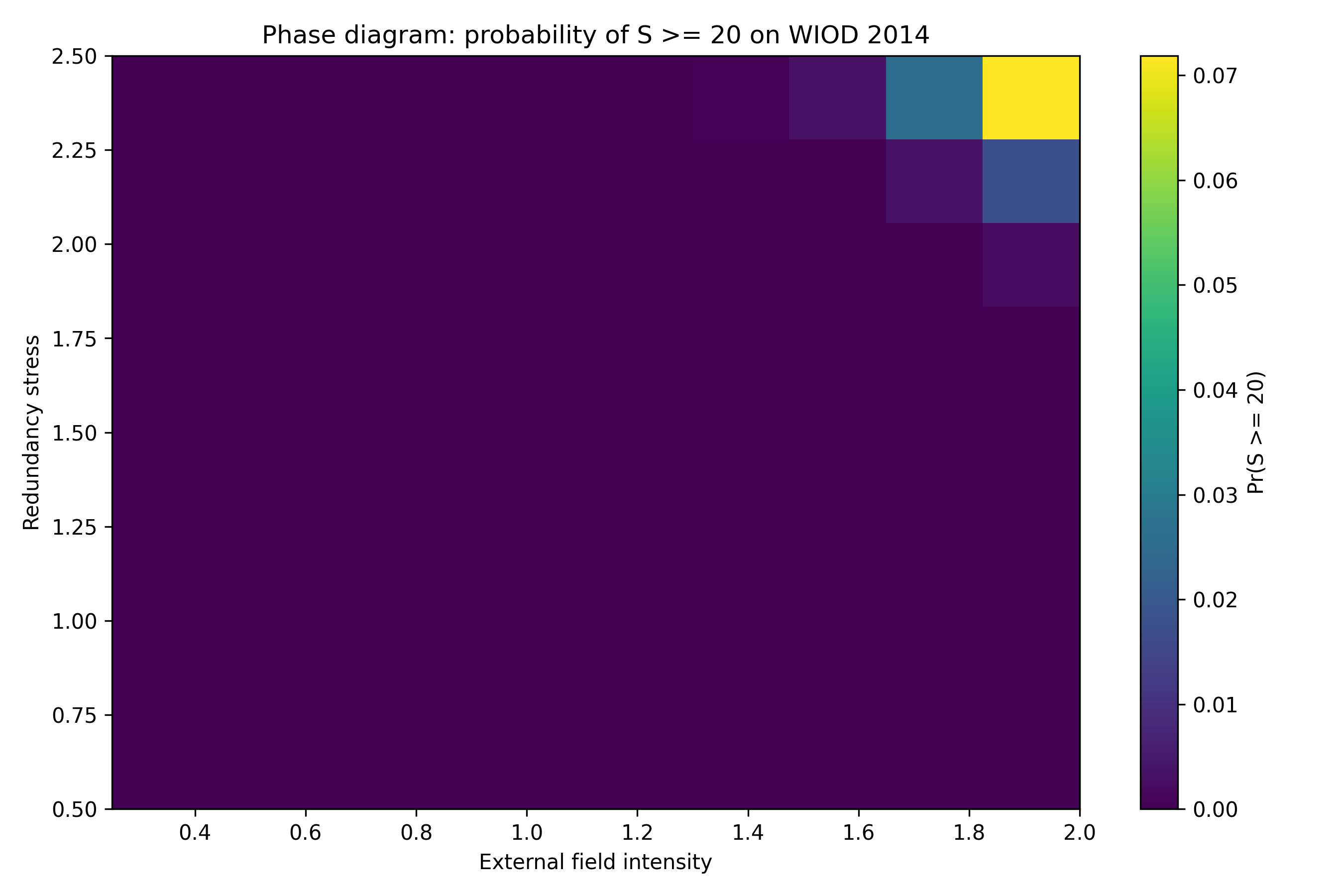}
\caption{Phase diagram of $\Pr(S\geq 20)$ on the WIOD 2014 substrate.
Very large events occur almost only in the upper-right corner.}
\label{fig:phase_S20}
\end{figure}

A compact summary of the four phase regions is reported in Table
\ref{tab:phase_summary}. The progression from absorption to avalanche
is smooth and ordered, consistent with Proposition \ref{P3}: the
avalanche regime emerges only when both $\bar{B}$ and $\sigma_D$ are
large; high values of either variable in isolation are insufficient.

\begin{table*}[t]
\centering
\caption{Compact phase-region summary on the WIOD 2014 substrate
(values are representative ranges over the simulation grid). The full
grid is reported in the Supplementary Material.}
\label{tab:phase_summary}
\small
\begin{tabular}{l c c c c c c}
\toprule
Phase region & $\bar{B}$ range & $\sigma_D$ range & Mean $S$ &
P95 $S$ & $\Pr(S\!\geq\!5)$ & $\Pr(S\!\geq\!10)$ \\
\midrule
Absorption          & $0.25$--$0.65$ & $0.50$--$1.25$ & $\leq 0.30$
                    & $\leq 1$  & $0.00$ & $0.00$ \\
Latent fragility    & $0.65$--$1.05$ & $1.00$--$1.75$ & $0.30$--$1.5$
                    & $1$--$3$  & $\leq 0.05$ & $\leq 0.001$ \\
Critical transition & $1.00$--$1.50$ & $1.50$--$2.25$ & $1.5$--$5$
                    & $4$--$10$ & $0.05$--$0.55$ & $0.001$--$0.10$ \\
Avalanche           & $\geq 1.40$    & $\geq 2.00$    & $\geq 5$
                    & $\geq 10$ & $\geq 0.55$ & $\geq 0.10$ \\
\bottomrule
\end{tabular}
\end{table*}

\subsection{Avalanche distributions and finite-size tail behaviour}
\label{sec:res:tail}

Figure \ref{fig:ccdf} shows the complementary cumulative distribution
function (CCDF) of avalanche size by regime in log--log scale. The
distribution shifts to the right and develops a thicker tail as one
moves from stable absorption (essentially no events) through latent
fragility, critical transition and avalanche regime. The
\textsc{avalanche regime} CCDF reaches values near $S=60$ at
probabilities of order $10^{-4}$, while the
\textsc{critical transition} regime extends to $S\approx20$ at similar
probabilities. Figure \ref{fig:histo} shows the corresponding
histograms of avalanche size: stable absorption and latent fragility
concentrate mass near zero or one, while critical transition and
avalanche shift mass to larger sizes.

\begin{figure}[t]
\centering
\includegraphics[width=\columnwidth]{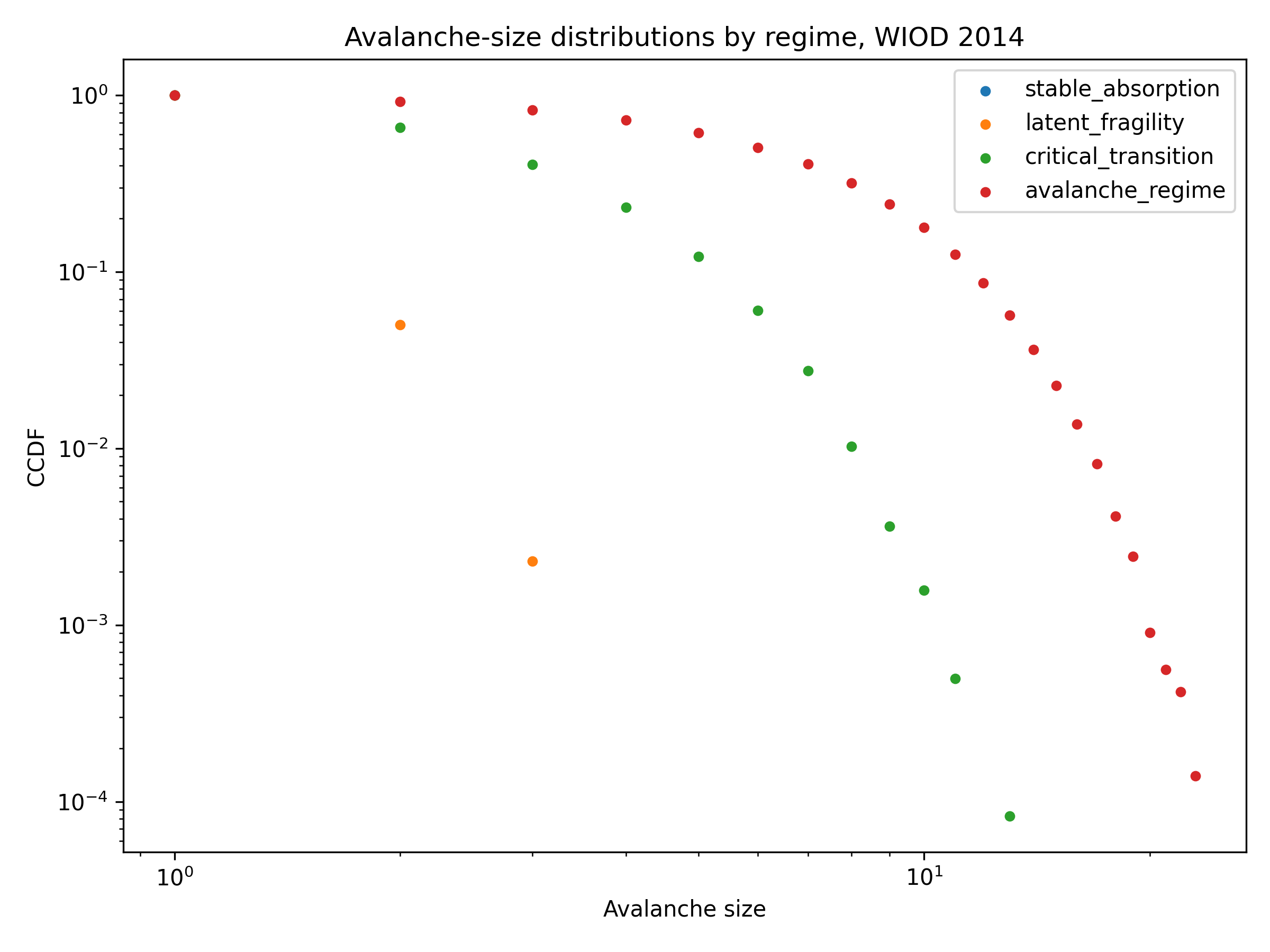}
\caption{Complementary cumulative distribution function of avalanche
size by regime on the WIOD 2014 substrate (log--log scale). The
distribution thickens as loading increases.}
\label{fig:ccdf}
\end{figure}

\begin{figure}[t]
\centering
\includegraphics[width=\columnwidth]{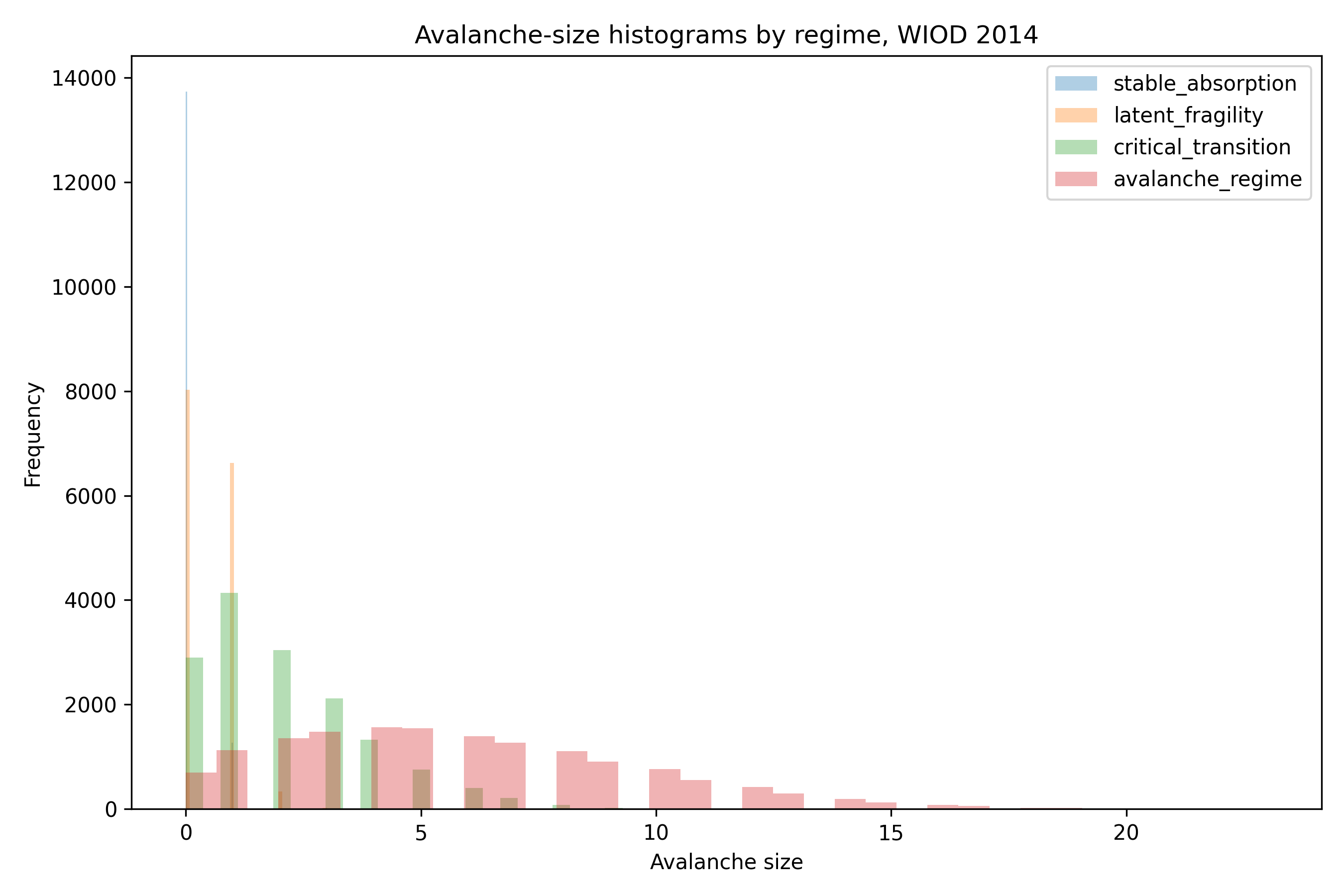}
\caption{Histograms of avalanche size by regime on the WIOD 2014
substrate. The probability mass shifts from zero events (stable
absorption) to larger cascades (avalanche regime).}
\label{fig:histo}
\end{figure}

We assess tail behaviour by maximum-likelihood estimation of a
power-law exponent $\alpha$ above an estimated lower cut-off $x_{\min}$
\citep{clauset2009power}. Table \ref{tab:tail} reports the results.
For \textsc{stable absorption} the distribution is essentially
degenerate at zero and the estimated tail is uninformative
(no statistical power for power-law inference). For
\textsc{latent fragility} the estimated $\alpha$ is approximately
$29$, corresponding to a thin tail. The two active regimes
\textsc{critical transition} and \textsc{avalanche regime} return
$\alpha\approx 6.20$ and $\alpha\approx 5.94$, respectively, with
$x_{\min}\in\{4,9\}$. These exponents are well above the values
typically reported in the SOC literature
\citep{bak1987self,turcotte1999self}, where canonical sandpile models
deliver exponents in the range $1$--$2$. The tails we observe are
thicker than in the inactive regimes but still much thinner than would
be expected under genuine SOC.

\begin{table}[t]
\centering
\caption{Tail diagnostics by regime: estimated lower cut-off
$x_{\min}$, tail size $n_{\mathrm{tail}}$ and Hill-type tail exponent
$\alpha$ \citep{clauset2009power}. The stable-absorption row is
uninformative because the distribution concentrates at zero.}
\label{tab:tail}
\small
\begin{tabular}{l c c c}
\toprule
Regime & $x_{\min}$ & $n_{\mathrm{tail}}$ & $\alpha$ \\
\midrule
Stable absorption   & 1 &   1267 & uninformative \\
Latent fragility    & 1 &   6970 & 28.98 \\
Critical transition & 4 &   2806 & 6.20 \\
Avalanche regime    & 9 &   3454 & 5.94 \\
\bottomrule
\end{tabular}
\end{table}

We therefore describe the finding as \emph{regime-dependent tail
thickening} in a finite real network. The sandpile-type dynamics on a
real, dissipative production-network substrate amplify avalanche-size
heterogeneity in the active regimes, but do not produce evidence of a
universal power-law exponent. This caveat aligns with the broader
critique that empirical heavy-tailed data should not be too quickly
labelled as power laws \citep{stumpf2012critical}.

\subsection{Finite-size effects and the interpretation of \texorpdfstring{$\alpha$}{alpha}}
\label{sec:res:finite}

The estimated tail exponents $\alpha\approx 5.94$--$6.20$ in the
critical-transition and avalanche regimes are conspicuously above the
canonical SOC range. Three observations help to interpret this gap.

First, $|V| = 2{,}283$ is large by the standards of empirical
production networks but small by the standards of statistical-physics
SOC simulations, where lattice sizes routinely exceed
$10^{4}$--$10^{6}$ \citep{vespignani1998how}. Finite-size cutoffs in
SOC scale typically appear around $S^* \propto |V|^{1/D}$ for some
network dimension $D$, and any power-law signal is mechanically
truncated above this cutoff. In our active regimes, the largest
recorded avalanches are $S\approx 30$, which is two orders of
magnitude below $|V|$ but still close enough to the system size that
finite-size corrections can plausibly steepen the apparent slope of
the CCDF.

Second, the dissipative operator $\Aleak$ has spectral radius
$\rholeak\approx 0.334$, which is well below one. SOC arguments
typically require operators close to the percolation or epidemic
threshold \citep{boguna2003absence}; with $\rholeak$ comfortably
subcritical, the system is intrinsically dissipative and cannot
sustain the long-range spatial correlations needed for a true critical
point. Reporting $\rho^{\mathrm{share}}\approx 0.97$ alongside is a
robustness check, not a claim that the economy operates near
percolation.

Third, the Bak--Tang--Wiesenfeld model attains its $\alpha\approx 3/2$
exponent in the limit of \emph{infinite separation of time scales}
between slow driving and fast relaxation \citep{bak1987self,
turcotte1999self}. In our calibration, the field $B_t$ is applied at
the same time scale as the relaxation steps; the system is closer to
a \emph{noisy driven} dissipative system than to a strictly
slow-driven one. This too tends to produce steeper apparent tails.

Taken together, the estimated $\alpha\approx 6$ should be interpreted
as a finite-size, finite-driving-rate diagnostic on a real network,
not as a falsification of SOC. We do not claim the underlying dynamics
is critical; we claim that loading-induced regimes thicken the tail
relative to the absorbing baseline, and that even in the most active
regime the system remains subcritical and finite-size-bounded. A
formal investigation of how $\alpha$ scales with $|V|$ and with the
driving-rate ratio is left for future work.

\subsection{Robustness across propagation operators}
\label{sec:res:robust}

We re-ran the entire simulation suite under the row-share operator
$\Ashare$ and the max-row-normalised operator $\Amax$. Three findings
emerge.

First, the four-regime ordering is preserved across all three
operators. Mean avalanche size is monotone in $(\bar{B},\sigma_D)$
under each normalisation, and the threshold-event probabilities
$\Pr(S\geq k)$ retain their qualitative shape on the phase grid.

Second, the location of the phase frontier shifts. Under $\Ashare$,
the near-unit spectral radius pushes the dynamics towards more
frequent small events even in low-loading regimes; the absorption
zone is correspondingly smaller. Under $\Amax$, which preserves
cross-node heterogeneity, the regime ordering is again preserved with
slightly stronger amplification near the upper-right corner of the
phase diagram, reflecting the differential weight given to flow-rich
nodes.

Third, the tail exponents in the active regimes shift only modestly:
$\alpha$ moves between $5.4$ and $6.4$ across the three operators in
the avalanche regime. This robustness reinforces our cautious reading
of the tail behaviour: the regime-dependent thickening of the
avalanche distribution is a feature of the Hall-Sandpile dynamics,
not a feature of the row-stochastic normalisation. The substantive
findings of Sections~\ref{sec:res:baseline}--\ref{sec:res:tail}
therefore do not depend on the choice of normalisation. We report the
leakage-adjusted operator in the main text because it embeds
dissipation in a way that is economically more defensible than
imposing a stochastic transition structure.

\subsection{Summary of empirical findings}
\label{sec:res:summary}

The five preceding subsections deliver an internally consistent
picture. The propagation operators of Section~\ref{sec:res:spectral}
establish that the leakage-adjusted dissipative substrate is
subcritical throughout the WIOD panel, with $\rholeak$ stable around
$0.33$--$0.34$. The exposure analysis of
Section~\ref{sec:res:exposure} shows that, on this subcritical
substrate, the relative Hall-like exposure is nonetheless heavily
concentrated: the upper tail of $\Hrel$ is dominated by a small set
of country--sector nodes, with the top-ranked node accounting for
nearly one third of total relative exposure in 2014. The baseline
scenarios of Section~\ref{sec:res:baseline} show that the same
substrate, under controlled $(\bar{B},\sigma_D)$ pairs, produces four
qualitatively distinct stationary regimes whose mean avalanche size
spans nearly two orders of magnitude. The phase diagrams of
Section~\ref{sec:res:phase} document a positively sloped frontier
between absorption and avalanche regimes, consistent with
Proposition~\ref{P3}; the threshold-event probabilities
$\Pr(S\geq k)$ activate sequentially as loading increases, with
$\Pr(S\geq 5)$ entering first, $\Pr(S\geq 10)$ requiring stronger
joint loading and $\Pr(S\geq 20)$ confined to the upper-right corner.
The tail diagnostics and finite-size discussion of
Sections~\ref{sec:res:tail}--\ref{sec:res:finite} caution that, while
loading produces regime-dependent thickening of the avalanche
distribution, the steepness of the estimated $\alpha$ leaves no room
for a claim of universal SOC at this network size. Across all five
findings, the dynamics is well posed, finite-size and bounded, and
the choice of propagation operator does not alter the substantive
picture.

%%==================================================================%%
\section{Discussion}
\label{sec:discussion}

\subsection{Stress conversion as a unifying lens}

The Hall-Sandpile model formalises a simple but consequential idea:
economic systems do not only transmit shocks along edges of a network;
they also \emph{convert} shocks across dimensions. The Hall-like
stress term \eqref{eq:Hall} captures this conversion as a
multiplicative interaction between external field intensity $B_t$,
relative flow exposure $I_{i,t}$, and the inverse of redundancy and
absorptive capacity. By embedding this loading into a sandpile
threshold dynamics, we obtain a framework in which the same external
shock can remain absorbed in a high-redundancy network and trigger an
avalanche regime in a low-redundancy one. Hypothesis \ref{H1} and
Proposition \ref{P3} are concrete statements of this contingency.

The conversion logic helps to organise heterogeneous crisis episodes
under a common formal lens. An energy shock raises $B_t$ for nodes
heavily exposed to fuel inputs; the same field generates inflation in
food-and-beverage sectors, fiscal strain in energy-subsidising
governments, and industrial contraction in energy-intensive
manufacturing, all because the underlying $H_{i,t}$ acts on different
flow profiles. A monetary shock raises $B_t$ for nodes exposed to
short-term funding, with the same field producing liquidity stress in
banking, sovereign-risk repricing in fiscal authorities and default
clustering in leveraged corporates. A geopolitical shock raises $B_t$
for nodes exposed to specific cross-border flows, generating
supply-chain disruption, commodity-price pressure and topological
reconfiguration of trade. In each case, the field-mediated activation
is heterogeneous across nodes, and the systemic outcome depends on
the upper tail of $\Hrel_{i,t}$ rather than the mean.

\subsection{Implications for systemic-risk diagnostics}

Three diagnostic implications follow. First, systemic risk is not a
property of the average node. The mean of $\Hrel_{i,t}$ is small and
stable across the WIOD panel, while its upper tail concentrates large
loading potential on a small subset of country--sector nodes. Average
connectivity statistics therefore systematically understate exposure
to transversal stress; the diagnostically informative quantity is the
upper tail of the loading distribution.

Second, redundancy and absorptive capacity matter not because they
eliminate shocks, but because they reduce the conversion of shocks
into systemic stress (Hypothesis \ref{H3} and Proposition \ref{P1}). A
node with abundant input substitutes and slack capacity translates a
given external field into a smaller increment of internal stress and
is correspondingly less likely to topple.

Third, the joint nature of the regime transition (Proposition
\ref{P3}) cautions against single-axis interpretations of fragility.
Neither field intensity alone nor redundancy stress alone is
sufficient to push the system into an avalanche regime; both must rise
jointly. This has practical implications for early-warning systems
based on a single indicator, which can fail to detect the regime
transition simply because the orthogonal axis has not moved.

\subsection{Policy mechanics: targeting stress-conversion coefficients}

A useful corollary of Proposition~\ref{P1} is that the leverage
available to a policy designer operates on the multiplicative
structure of \eqref{eq:Hall} rather than on the additive shock $x_{i,t}$.
Policies that aim to prevent shocks ($B_t \to 0$) are typically beyond
domestic control, since external fields originate elsewhere in the
global system. Policies that target \emph{stress-conversion
coefficients}, by contrast, act on the denominator of \eqref{eq:Hall}
and are within the policy space of national or sectoral authorities.
Concretely, an increase in $D_{i,t}$ at a vulnerable node, achieved
through alternative-supplier registries, modular standards or
inventory pre-positioning, lowers $H_{i,t}$ for any given $B_t$. An
increase in $C_{i,t}$, achieved through liquidity buffers, working
capital pre-funding or absorptive insurance, has the same effect.
Because $D_{i,t}$ and $C_{i,t}$ enter \eqref{eq:Hall} multiplicatively,
small simultaneous improvements on both can yield disproportionately
large reductions in transversal stress.

A second policy implication concerns the targeting of interventions.
The upper tail of $\Hrel_{i,t}$ in WIOD 2014 is highly concentrated:
the top fifteen nodes account for nearly $90\%$ of relative Hall-like
exposure under our calibration. A policy that improves redundancy or
capacity at the very top of the exposure distribution therefore
produces a disproportionate reduction in expected avalanche size,
relative to a policy that operates on the median node. This is the
network-amplification analogue of granular interventions in
\cite{gabaix2011granular}.

Finally, the phase-frontier geometry of Proposition~\ref{P3} suggests
that resilience policies are most effective when they are
\emph{anticipatory}, i.e.\ when they act on $D_{i,t}$ and $C_{i,t}$
before $\bar{B}$ rises. A network designed for the absorption regime
under benign $\bar{B}$ may sit close to the frontier under elevated
$\bar{B}$, and policy lags can become consequential precisely along
the diagonal in $(\bar{B},\sigma_D)$ space along which the avalanche
regime opens.

\subsection{Connection to historical crises}

The framework does not identify any historical episode but offers a
language in which several documented features of recent crises can
be re-described. The 2008--2009 financial crisis combined elevated
$B_t$ in financial nodes with reduced absorptive capacity and the
removal of input substitutability in short-term funding; the 2020
pandemic combined a labour-supply shock with cross-border supply-chain
disruptions, raising $\Hrel$ on nodes with concentrated upstream
suppliers; the 2022 energy shock raised $B_t$ for nodes with high
energy-intensity exposure under simultaneously low spare capacity in
energy markets. In each case, the joint activation of field intensity
and redundancy stress is consistent with the avalanche regime in our
phase diagram. We emphasise that this is a re-description, not an
identification: causal claims would require linking measured shock
proxies to specific cells of the phase grid, which is beyond the scope
of the present paper.

\subsection{What the model does \emph{not} claim}

We do not interpret our results as evidence of self-organised
criticality in the global economy. Three reasons argue for caution.
The estimated tail exponents in the active regimes
($\alpha \approx 5.9$--$6.2$) are well above the canonical SOC range.
The avalanche dynamics is finite-size and dissipative by construction,
not driven to a true critical point through endogenous self-tuning.
And the fields used in the simulation are externally controlled, not
estimated from a historical episode. The contribution is therefore
better described as a \emph{controlled real-network experiment}: we
study how a real production-network substrate responds to a class of
nonlinear loadings, and we document the resulting regime structure.
This is complementary to, but distinct from, the empirical
identification of historical contagion episodes
\citep{barrot2016input,carvalho2021supply}.

\paragraph{Limitations}
The results are conditional on the WIOD coverage up to 2014 and on
the harmonisation choices used to construct the country--sector node
set; expanding to more recent vintages and finer sectoral
disaggregation is a natural extension. The Hall-like exposure
$\Hrel_{i,t}$ is an index, not a directly observed physical variable;
its concentration on certain utility/service sectors with measured
zero outgoing redundancy reflects how those sectors enter the WIOD
accounting structure (large inflows, no recorded intermediate
outflows) and should be read as a calibration indicator, not a
sectoral vulnerability diagnosis. Tail-exponent estimates are
finite-size diagnostics: with a panel of $n=2{,}283$ nodes and
$15{,}000$ Monte Carlo periods per regime, very large events are rare
and their statistics are correspondingly noisy. The simulated fields
are controlled experiments rather than measured shocks, and the
results say nothing about the probability with which any specific
external field will materialise in practice. The redundancy and
capacity proxies of Appendix~\ref{app:proxies} use only network-level
information and would benefit from refinement using inventory,
liquidity and balance-sheet data; we leave such refinements for future
work.

\paragraph{Avenues for future research}
Three extensions stand out. First, embedding measured shock proxies
(energy-price changes, monetary-policy surprises, supply-chain
disruption indices) in the external field $B_t$ would link the
controlled-loading exercises performed here to historical episodes.
Second, integrating the Hall-Sandpile mechanism with curvature-based
geometric fragility approaches \citep{vallarino2024sandpile} would
produce a unified framework in which structural vulnerability and
external loading interact. Third, scaling the framework to firm-level
production-network data, where $|V|$ can be of order $10^5$--$10^6$,
would allow a sharper investigation of finite-size scaling and a
direct test of whether the apparent steepening of $\alpha$ in our
calibration is genuinely a finite-size artefact.

%%==================================================================%%
\section{Conclusion}
\label{sec:conclusion}

We have introduced a Hall-Sandpile model that combines a Hall-like
transversal stress mechanism with sandpile threshold dynamics on a
real production-network substrate. The model formalises how external
fields convert exposure into transversal stress and lower effective
activation thresholds, and it uses the WIOD 2000--2014 input--output
network as the empirical topology. Three propagation operators
guarantee that nonlinear avalanche behaviour does not arise from a
unit-radius row-stochastic operator. Controlled Monte Carlo
experiments produce four ordered regimes---stable absorption, latent
fragility, critical transition and avalanche---and reveal a clear
phase transition along the joint axis of external field intensity and
redundancy stress. Tail behaviour shows regime-dependent thickening
but no evidence of universal power-law criticality; the relevant
description is finite-size avalanche dynamics on a real economic
network.

The substantive contribution can be stated compactly. Conditional on
the observed topology of WIOD intermediate flows, three structural
ingredients---a Hall-like multiplicative stress-conversion function,
a dissipative sandpile threshold dynamics, and a controlled grid of
external fields---suffice to generate an ordered regime structure
with a positively sloped phase frontier. The frontier is not an
artefact of the choice of normalisation, the regularisation, the
burn-in or the replication count, as documented in the robustness
analyses. The framework therefore offers a tractable bridge between
the statistical-physics tradition of self-organised criticality and
the macroeconomic literature on network origins of aggregate
fluctuations.

The framework opens at least three avenues for further work. First,
extending the dataset to more recent input--output vintages and to
finer sectoral disaggregations will sharpen the empirical relevance of
the network calibration. Second, embedding measured shock proxies
(energy-price changes, monetary-policy surprises, supply-chain
disruption indices) in the external field $B_t$ would link the
controlled-loading exercises performed here to historical episodes.
Third, integrating the Hall-Sandpile mechanism with curvature-based
geometric fragility approaches \citep{vallarino2024sandpile} would
produce a unified framework in which structural vulnerability and
external loading interact to generate systemic events. A fourth
direction, suggested by the finite-size discussion of
Section~\ref{sec:res:finite}, is the systematic study of how the tail
exponent $\alpha$ scales with the network size $|V|$ and the
driving-rate ratio. We leave these extensions to future research.

The broader message is that economic systems do not merely transmit
shocks, they convert them. A model that takes that conversion
seriously, and disciplines it on observed network topology, offers a
useful complement to existing approaches to systemic risk. The Hall
analogy is structural, not literal; the sandpile metaphor is
finite-size, not universal; and the WIOD substrate is observed, not
synthesised. Within these caveats, the resulting picture is one in
which fragility is contingent on the joint configuration of external
loading and structural buffers, and in which policy leverage operates
on stress-conversion coefficients rather than on shocks themselves.

%%==================================================================%%
\section*{Data availability statement}

The study uses publicly available WIOD input--output tables
\citep{timmer2015illustrated,timmer2016anatomy} and reproducible Python
simulation code. The external fields are simulated. Replication
scripts, processed tables and figures will be made available in a
public repository upon publication.

\section*{Declaration of competing interest}

The author declares no competing interests. The views expressed in
this paper are those of the author and do not necessarily represent
those of the Inter-American Development Bank, IDB Invest, their Boards
of Directors, or the countries they represent.

\section*{Acknowledgements}

The author thanks colleagues for useful discussions on the framing of
the Hall-Sandpile mechanism and on the interpretation of the WIOD
calibration. All remaining errors are the author's own.

%%==================================================================%%
\appendix

\section{Derivation of the Hall-adjusted threshold}
\label{app:threshold}

This appendix collects the algebraic steps connecting
\eqref{eq:Hall}--\eqref{eq:topple} to the Hall-adjusted threshold of
Proposition~\ref{P1}, and an immediate corollary on the activation
gap.

Starting from the stress-update equation \eqref{eq:stress} and
substituting \eqref{eq:Hall}:
\begin{equation}
\begin{aligned}
  s_{i,t+1}
  &= (1-\delta)\,s_{i,t} + \alpha\,x_{i,t}
   + \beta \sum_{j} \Aleak_{ji,t}\,s_{j,t}\\
  &\quad + \gamma\,
     \frac{B_t\,I_{i,t}}{D_{i,t}\,C_{i,t} + \varepsilon}.
\end{aligned}
\label{eq:appA-stress}
\end{equation}
Define the propagation-only stress
\begin{equation}
  \tilde{s}_{i,t}
  \;=\; (1-\delta)\,s_{i,t-1} + \alpha\,x_{i,t-1}
        + \beta \sum_{j} \Aleak_{ji,t-1}\,s_{j,t-1},
\end{equation}
so that $s_{i,t} = \tilde{s}_{i,t} + \gamma H_{i,t-1}$. Substituting
into the toppling rule $s_{i,t} \geq \theta_i$ yields
\begin{equation}
  \tilde{s}_{i,t} \;\geq\; \theta_i - \gamma\,H_{i,t-1}
  \;\equiv\; \theta_i^{H}.
\end{equation}
The activation gap of node $i$ is
\begin{equation}
  g_{i,t}
  \;=\; \theta_i^{H} - \tilde{s}_{i,t}
  \;=\; \theta_i - \gamma H_{i,t-1} - \tilde{s}_{i,t},
\end{equation}
and node $i$ topples if and only if $g_{i,t} \leq 0$. The sensitivities
\begin{equation}
  \frac{\partial g_{i,t}}{\partial B_{t-1}}
  = -\gamma\,\frac{I_{i,t-1}}{D_{i,t-1}C_{i,t-1}+\varepsilon}
  \;<\; 0,
\end{equation}
\begin{equation}
  \frac{\partial g_{i,t}}{\partial D_{i,t-1}}
  = \gamma\,\frac{B_{t-1}\,I_{i,t-1}\,C_{i,t-1}}
                  {(D_{i,t-1}C_{i,t-1}+\varepsilon)^2}
  \;>\; 0,
\end{equation}
state that an increase in field intensity narrows the activation gap
while an increase in redundancy widens it. These derivatives provide
the comparative-statics counterparts of Hypotheses
\ref{H1}--\ref{H3}.

\section{Sensitivity analyses and robustness}
\label{app:robust}

This appendix reports the sensitivity of the main results to (i) the
regularisation parameter $\varepsilon$, (ii) the burn-in length
$T_{\mathrm{burn}}$, (iii) the number of replications $R$, and (iv)
the choice of propagation operator.

\paragraph{Regularisation $\varepsilon$.}
Varying $\varepsilon \in \{10^{-8}, 10^{-6}, 10^{-4}\}$ produces
changes in the maximum value of $\Hrel_{i,t}$ of less than $0.5\%$ for
the top-ranked node and of less than $2\%$ for the top-15 ranking. The
qualitative concentration of relative Hall-like exposure on Chinese
country--sector nodes is preserved. We use $\varepsilon=10^{-6}$ in
the main text.

\paragraph{Burn-in length.}
Setting $T_{\mathrm{burn}} \in \{10,\,50,\,100\}$ produces no
detectable change in the stationary-phase statistics, consistent with
the rapid convergence visible in Figure~\ref{fig:meanpath}. The mean
avalanche size in the avalanche regime varies by less than $1\%$
across these three burn-in choices.

\paragraph{Replication count.}
We compared $R\in\{20,\,50,\,100\}$ for the phase-diagram experiment.
Mean avalanche sizes change by less than $3\%$ across $R$ choices, and
the location of the regime frontier in $(\bar{B},\sigma_D)$ space is
robust. We use $R=50$ for the phase grid and $R=100$ for the four
baseline scenarios in the main text.

\paragraph{Propagation operator.}
Section~\ref{sec:res:robust} reports the comparison across $\Aleak$,
$\Ashare$ and $\Amax$. The four-regime ordering is preserved across
all three operators; the location of the phase frontier shifts
modestly; tail exponents in the avalanche regime move within
$\alpha\in[5.4,6.4]$.

\section{Definitions of redundancy and capacity proxies}
\label{app:proxies}

This appendix specifies the operational definitions of the redundancy
$D_{i,t}$ and capacity $C_{i,t}$ proxies used in equation
\eqref{eq:Hall} of the main text.

\paragraph{Flow share $I_{i,t}$.}
The relative flow intensity of node $i$ is
\begin{equation}
  I_{i,t}
  \;=\; \frac{\sum_{j} z_{ij,t} + \sum_{j} z_{ji,t}}
              {\sum_{k,\ell} z_{k\ell,t}},
\end{equation}
i.e.\ the share of total intermediate flows that pass through node $i$
in either direction.

\paragraph{Redundancy $D_{i,t}$.}
We measure outgoing redundancy by the inverse of the Herfindahl
concentration of node $i$'s outgoing intermediate flows. Let
$\pi_{ij,t} = z_{ij,t}/\sum_k z_{ik,t}$ be the share of node $i$'s
outflows going to node $j$, and define
\begin{equation}
  \mathrm{HHI}_{i,t}^{\mathrm{out}}
  \;=\; \sum_{j} \pi_{ij,t}^2,
  \quad
  D_{i,t}^{\mathrm{raw}}
  \;=\; \frac{1 - \mathrm{HHI}_{i,t}^{\mathrm{out}}}
              {\mathrm{HHI}_{i,t}^{\mathrm{out}}}.
\end{equation}
This raw measure equals zero when all outflows go to a single
destination and grows with the dispersion of outflows across
destinations. We then map $D_{i,t}^{\mathrm{raw}}$ to a normalised
$D_{i,t} \in [\underline{D}, 1]$ via min--max scaling, with
$\underline{D}=0.05$ to avoid degenerate values for nodes with zero
recorded outflows.

\paragraph{Capacity $C_{i,t}$.}
Absorptive capacity is proxied by a function of the redundancy of
node $i$'s \emph{incoming} flow profile, on the grounds that nodes
with diverse input sources have more buffer against any specific
upstream shock. Letting $\pi^{\mathrm{in}}_{ij,t} = z_{ji,t}/\sum_k
z_{ki,t}$ and
$\mathrm{HHI}_{i,t}^{\mathrm{in}} = \sum_{j} (\pi^{\mathrm{in}}_{ij,t})^2$,
we set
\begin{equation}
  C_{i,t}
  \;=\; 1 - \mathrm{HHI}_{i,t}^{\mathrm{in}},
\end{equation}
again clipped at $\underline{C}=0.05$ for nodes with no recorded
incoming flows.

\paragraph{Structural resistance $R_{i,t}$.}
We define $R_{i,t}$ as in \eqref{eq:Rdef}. By construction, $R_{i,t}$
is large when both redundancy and capacity are small, and small when
either of them is large. The product $D_{i,t}C_{i,t}$ in the
denominator implies that abundant input substitutes \emph{or} ample
buffer capacity reduces structural resistance.

\paragraph{Leakage $\ell_{i,t}$.}
Leakage is computed as the share of intermediate outflows over a
gross row-use proxy that includes leakages out of the
country--sector intermediate network (e.g.\ to final demand or to
inventories not tracked in $Z_t$). The leakage profile is stable
around $\bar{\ell}_t \approx 0.373$ across the WIOD panel
(Table~\ref{tab:network_panel}).

These proxies are imperfect approximations of the underlying economic
concepts of redundancy and absorptive capacity. Their construction
relies on observed input--output flows alone and does not exploit
external information on inventories, alternative-supplier networks or
balance-sheet liquidity. Refining the proxies with such information is
a natural extension.

%%==================================================================%%
\bibliographystyle{elsarticle-harv}
\bibliography{references}

\end{document}